\newcommand{\tset}[1]{\{#1\}}
\newcommand{\sst}{\text{ such that }}
\newcommand{\FR}[1]{\mathcal{F}_{#1}}
\newcommand{\FRinv}[1]{\mathfrak{C}}
\newcommand{\mon}[1]{M_{#1}}
\newcommand{\defeq}{=_{def}}
\newcommand{\proofinappendix}{\hfill$\blacksquare$}
\renewcommand{\abs}[1]{{\left|#1\right|}}
\DeclarePairedDelimiter{\floor}{\lfloor}{\rfloor}
\DeclarePairedDelimiter{\ceil}{\lceil}{\rceil}
\tikzset{%
 initial text={},
 double distance=1.5pt
}
\def\old@comma{\ }
\title{Optimal Runtime Verification of Finite State Properties over Lossy Event Streams}
\titlerunning{Optimal RV over Lossy Event Streams}
\author{Peeyush Kushwaha\inst{1} \and Rahul Purandare\inst{1}\and Matthew B. Dwyer\inst{2}}
\authorrunning{P. Kushwaha et al.}
\institute{IIIT-Delhi, New Delhi, India 
\email{\{peeyush16254,purandare\}@iiitd.ac.in}\\
\and University of Virginia, Charlottesville, USA
\email{matthewbdwyer@virginia.edu}}
\begin{document}

\maketitle

\begin{abstract}

Monitoring programs for finite state properties is challenging due to high memory and execution time overheads it incurs. Some events if skipped or lost naturally can reduce both overheads, but lead to uncertainty about the current monitor state. In this work, we present a theoretical framework to model these lossy event streams and provide a construction for a monitor which observes them without producing \textit{false positives}. The constructed monitor is optimally sound among all complete monitors. We model several loss types of practical relevance using our framework and provide construction of smaller approximate monitors for properties with a large number of states.
\end{abstract}

\keywords{Runtime Verification \and Finite State Properties \and Optimization}

\section{Introduction}

\label{sec:introduction}
Monitoring the execution behavior of software goes back to the dawn of
programming and is a standard practice, e.g., through logging, programmer inserted print statements, and assertions. 
In the late 1990s, researchers began to explore the use of formal 
specifications to define run-time monitors~\cite{kim1999ecrts} which brought the expressive power of formal methods to monitoring. 
Such \textit{run-time verification} techniques rely on a set of defined
\textit{events} which denote the occurrence of program behavior relevant to a property specification, e.g., the invocation of a particular method, along with  associated data, e.g., method parameters.
A run-time monitor observes an \textit{event stream} generated by a program execution, 
incrementally updates the \textit{state} of the specified property, and reports a property violation when a violating state is reached.

Run-time verification is attractive because it 
complements sound static verification approaches that
cannot scale to modern software systems.   
The past two decades have witnessed work on scaling run-time 
verification in three dimensions.
The family of specification languages for monitors has been
steadily increasing in number and expressive power, e.g.,~\cite{havelund2002tacas}.
The treatment of high-level abstractions that are present in
modern languages, such as, object identifiers for specifying properties
of class instances, has been addressed, e.g.,~\cite{chen2007oopsla}.
Finally, a range of techniques for reducing the run-time overhead, 
while preserving violation detection, have been
developed, e.g.,~\cite{Bodden10a,Bodden10b,Dwyer07,Purandare10}. 

In this paper, we consider the additional challenge of \textit{lossy event streams} that arises in
the deployment of run-time verification in realistic system contexts such as:
networked and distributed systems where message loss or reordering may 
be inherent, real-time systems which may shed monitoring workloads to 
meet scheduling constraints, or web-based systems with quality-of-service 
guarantees may lead to suppressed monitoring.
In such systems, the original event stream may be perturbed by
dropping events, reordering events, or dropping or corrupting data 
correlated with events.

 Lossy event streams are problematic for existing run-time verification 
approaches since treating a lossy stream the same as the original stream 
may lead to missing a property violation or falsely declaring a violating execution. %
Lossy streams do not, in general, permit the same degree of precision as 
the original stream. However, as we demonstrate, run-time 
verification frameworks can be adapted
to effectively bound the impact of the loss on verification results.

The paper makes foundational contributions to runtime verification by (a) \textit{defining an expressive framework for modeling lossy event streams}, (b) \textit{developing techniques for synthesizing provably complete and optimal verification monitors under those models}.  Importantly, these results preclude the need for additional theory development for individual loss types and set the stage for more applied work and tool development.
The applicability and utility of these methods are demonstrated by (c) \textit{formulating a collection of diverse loss models} and, (d) \textit{evaluating the ability of the methods to detect property violations in the the presence of losses in practice}. 

Moreover, we discuss how the event losses in current literature are specific instances of our generalized framework in \cref{sec:more-examples} and \cref{sec:related}.

In the next section, we  provide an overview of our solution. After introducing notation and basic definitions, in \cref{sec:filters} we formalize the loss model and show how verification monitors can be constructed that are complete and optimal for that model. \cref{sec:more-examples} presents example instantiations of the framework that highlight its range.
We describe related work in \cref{sec:related}.

\section{Overview}
\label{sec:motivation}

\begin{figure}
\centering
\begin{subfigure}{0.3\textwidth}
\centering
\usetikzlibrary{positioning,automata,calc,shapes.geometric,arrows,fit} 
\begin{tikzpicture}[shorten >=1pt,node distance=1,on grid]
  \tikzset{every node/.style={scale=0.8}}
  \node[state,accepting,initial]    (q_0)                     {$q_0$}; 
  \node[state,accepting]            (q_1)     [below right=of q_0]  {$q_1$}; 
  \node[state,accepting]            (q_2)     [below=of q_1]  {$q_2$}; 
  \node[state]                      (err)     [below left=of q_2]  {$q_{err}$}; 
  
  \path[->] (q_0)     edge                node [above,sloped] {c}   (q_1)
            (q_1)     edge  [loop right]  node          {n} ()
            (q_1)     edge                node [right] {u} (q_2)
            (q_2)     edge  [loop right]  node [right] {u} ()
            (q_2)     edge                node [above,sloped] {n} (err);
\end{tikzpicture}
\caption{SafeIter Property\label{fig:safeiter-property}}
\end{subfigure}%
\hfill
\begin{subfigure}{0.3\textwidth}
\centering
\usetikzlibrary{positioning,automata,calc,shapes.geometric,arrows,fit} 
\begin{tikzpicture}[shorten >=1pt,node distance=1.3,on grid, scale=0.8]
  \tikzset{every node/.style={scale=0.8}}
  \node[state,accepting,initial]    (q_0)                      {$q_0$}; 
  \node[state]                      (q_1)     [below right=2 of q_0] {$q_1$}; 
  
  \path[->] (q_0)     edge [loop above]   node [above right= 0 and -10pt]      {c/1, n/1, u/1, c/c, n/n, u/u} (q_0)
            (q_0)     edge [bend left=30] node [above,sloped]      {c,n,u}                (q_1)
            (q_1)     edge [bend left=30] node [below,sloped]      {c/2, n/2, u/2}           (q_0)
            ;
\end{tikzpicture}
\vspace*{-3mm}
\caption{Loss Filter\label{fig:safeiter-filter}
}
\end{subfigure}%
\hfill
\begin{subfigure}{0.3\textwidth}
\centering
\usetikzlibrary{positioning,automata,calc,shapes.geometric,arrows,fit} 
\begin{tikzpicture}[shorten >=1pt,node distance=1,on grid]
  \tikzset{every node/.style={scale=0.8}}
  \node[state,accepting,initial]    (q_0)                     {$q_0$}; 
  \node[state,accepting]            (q_1)     [below right= 0.7 and 1.2 of q_0]  {$q_1$}; 
  \node[state,accepting]            (q_2)     [below= 1.2 of q_1]  {$q_2$}; 
  \node[state]                      (err)     [below left= 0.7 and 1.2 of q_2]  {$q_{err}$}; 
  
  \path[->] (q_0)     edge                node [above,sloped] {c,1,2}   (q_1)
            (q_0)     edge  [bend right=10]   node [above] {2}   (q_2)
            (q_1)     edge  [loop right]  node          {n,1,2} ()
            (q_1)     edge                node [right] {u,1,2} (q_2)
            (q_1)     edge  [bend right=10]  node [above] {2}   (err)
            (q_2)     edge  [loop right]  node [right] {u,1,2} ()
            (q_2)     edge                node [above,sloped] {n,1,2} (err);
\end{tikzpicture}
\vspace*{-5mm}
\caption{Alternate Property\label{fig:safeiter-transformed}
}
\end{subfigure}
\label{fig:leading-safeiter-example}
\caption{Safe Iterator -- after creating (c) an iterator no updates (u)
are permitted so long as next (n) elements remain to be accessed.}
\end{figure}
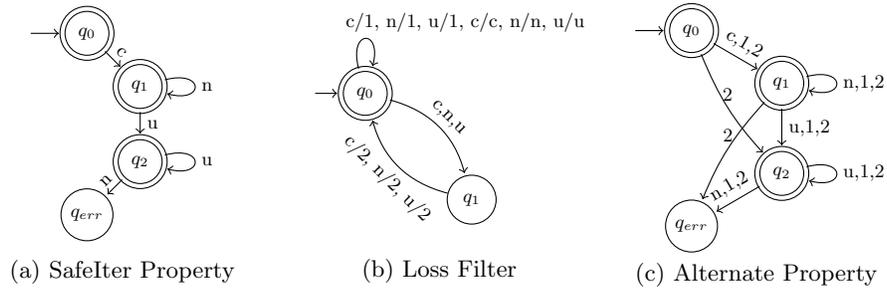

We illustrate the problem of loss in monitoring stateful properties
by way of example and introduce the key insights behind the
techniques we develop to address the problem.

Safety properties for run-time monitoring can be modelled using deterministic finite-state automata (DFA). An event is represented by a \emph{symbol} and an event stream by a \emph{string} of symbols. 
Fig.\ref{fig:safeiter-property} shows the DFA for the 
\textit{safe iterator} property which states that modification of 
a collection during iteration is not permitted.  
The DFA is expressed over the alphabet $\{c, n, u\}$ denoting
creation of an iterator, accessing the next element in the iterator,
and updating the collection being iterated.

The state $q_{err}$ is a sink state (self transitions ommitted for brevity)
denoting the violation of the property.  All violating strings include
a subsequence $\ldots u n \ldots$ indicating that an update was performed
prior to accessing the next element.
Statements that are free of such a subsequence end at one of the three
accept states and are non-violating.

Loss may come in different forms.
For example, symbols in a string may be erased (e.g., $c \rightarrow \epsilon$), reordered (e.g., $n u \rightarrow u n$), or be modeled with only
partial information (e.g., their count $n n n \rightarrow 3$).

To illustrate, we consider the case where symbols are dropped from
the string, but the number of dropped symbols is recorded.
This type of loss could be introduced intentionally as a means of
mitigating excessive runtime overhead in monitoring, while preserving
fault detection capability.
Consider a string $c n n u n$ where this loss is applied to the 2nd
and 3rd symbols -- we model the resulting string as $c 2 u n$.
This could represent 4 possible strings with a prefix of $c$,
followed by an element of $\{ u, n\}^2$, and a suffix of $u n$.
For longer strings where sequences of length $k$ are lost,
the combinatorics of their possible replacements $\{ u, n \}^k$ 
make it intractable to consider all of the possibilities.
Despite this, the structure of Fig.~\ref{fig:safeiter-property} dictates 
that any string of the form $c k u n$ violates the property, 
thereby illustrating that even with loss it is possible to perform 
accurate monitoring.

We formalize the intuition above in a \textit{loss model} that
maps symbols from the property to an alternative symbol set.
For example, the loss model described above is defined by the mapping
$\{ u, n \}^k \mapsto k$. 
In \S\ref{sec:filters}, we show that all mappings of interest are a 
restricted class of relations on strings called rational relations. 
Rational relations can be represented by non-deterministic 
finite-state transducers (NFTs). 
An NFT maps an alphabet, $\Sigma$, to an alternative alphabet, $\Sigma_a$. 
Fig.~\ref{fig:safeiter-filter} shows the NFT with the mapping 
for alternate symbols $1$ and $2$ that lose the identify of 
symbols in a subsequence but retain the length of the subsequence.
Then $c 2 u n$ represents the loss of identity of the 2nd and 3rd symbols of
any string of length 5.

\begin{wrapfigure}[12]{r}{0.35\textwidth}
\vspace*{-10mm}
\begin{tabular}{ |c|c|c| } 
\hline
   & String & State(s) \\ 
\hline\hline
  $O_1$ & cnnunnun & $q_{err}$\\ 
\hline
 $F_1$ & 2nun2n  & $\set{q_{err}} $\\ 
   & cn1unnun  & $\set{q_{err}} $\\ 
   & cn2nn2  & $\set{q_2, q_{err}}$ \\ 
 \hline\hline
  $O_2$ & cnnuu & $q_2$\\ 
\hline
 $F_2$ & c2uu & $\set{q_2, q_{err}} $\\ 
   & cnnu1  & $\set{q_2, q_{err}} $\\ 
   & 2n2  & $\set{q_1, q_2, q_{err}}$ \\ 
 \hline\hline
\end{tabular}
\caption{Filtered Strings\label{fig:safeiter-table}}
\end{wrapfigure}
Retaining partial information about an event string 
might be insufficient to conclude that a violation 
occurred (or did not occur). 
We report a violation only when the partial information is 
sufficient to conclude that there must be a violation -- such
monitoring is \textit{complete} since it never reports a false violation.
Consider the original string ($O_1$) in Fig.~\ref{fig:safeiter-table}
and the set of 3 filtered strings ($F_1$) induced by the NFT in
Fig.~\ref{fig:safeiter-filter}.
Tracing through Fig.~\ref{fig:safeiter-property} on the first two
filtered strings by interpreting $1$ and $2$ as any individual or 
pair of symbols, respectively, leads only to the
error state -- since they preserve the fact that an $n$ follows a $u$.
These strings would be reported as violations.
On the other hand, the string $c n n u u$ ($O_2$) is non-violating and of the set of 3 filtered strings ($F_2$) none reach \textit{only} the $q_{err}$ state.  Completeness assures that no filtered string can have $u$ followed by a $n$ and monitor in Fig.~\ref{fig:safeiter-filter} won't report a false violation.

Assuring completeness in violation reporting means, however, that
the reporting of some violations may be missed. 
For example, the third filtered string for $F_1$ suppresses all
$u$ making it impossible to definitively conclude that the observed
string is a violation.
Our goal is to report violations on as many strings of 
alternate symbols as possible while maintaining completeness.
We refer to this as \textit{optimal lossy monitoring} and present its formulation in \cref{sec:filters} and evaluate its tolerance to loss in practice in \cref{sec:evaluation}.

As detailed in \cref{sec:filters}, monitoring under a loss model, expressed as an NFT, is achieved by
transforming the property of interest to an NFA with transitions on
the symbols in the NFT's output alphabet, $\Sigma_a$.
Fig.~\ref{fig:safeiter-transformed} shows the alternate property
for the property in Fig.~\ref{fig:safeiter-property}
transformed by the NFT in Fig.~\ref{fig:safeiter-filter}.
Transitions on the alternative symbols, $\{ 1, 2 \}$, give rise to
non-determinism.
The transition function for a property DFA, 
$\delta : S \times \Sigma \rightarrow S$, or an alternative property NFA,
$\delta_a : S \times \Sigma_a \rightarrow 2^S$, naturally
lift to strings and sets of strings.  
The correctness criterion for alternate properties requires that
$\forall \sigma \in \Sigma^* : \delta(\sigma) \subseteq \delta_a(f(\sigma))$, 
and completeness demands that errors are reported only when
$\delta_a(f(\sigma)) \subseteq q_{err}$.

Our formalization of symbol loss is general.  It can 
be used to addresses the notion of \textit{natural} loss, 
e.g., where environmental factors result in a symbol being dropped from an 
event stream. 
Loss can also be \textit{induced} artificially as a means of suppressing events or data associated with events to reduce
monitoring overhead, which the framework accommodates 
naturally.   We justify the breadth of applicability of our
approach by demonstrating that it accommodates existing loss 
types in the literature \cite{Falzon13,Purandare10,Dwyer08} in \cref{sec:more-examples}.

\section{Basic Definitions}
\label{sec:basic-definitions}

Notation: $x \prec y$ means string $x$ is a proper prefix of string $y$. A function $f: X \rightarrow Y$ lifted to sets means that $f(S) = \set{f(x) \mid x \in S} \forall S \subseteq X$. Middle dot ($\cdot$) denotes string concatenation. It may also be lifted to sets of strings.  $\times$ denotes the cartesian product of two sets. For a relation $R \subseteq X \times Y$, $R(x) = \set{y \mid xRy}$ and $R^{-1}(y) = \set{x \mid xRy}$. $\Rightarrow \Leftarrow$ denotes a contradiction. $\#x(y)$ denotes number of characters $x$ in string $y$. If we write an element $x \in X$ where a set $S \subseteq X$ is expected, it denotes the singleton set $\set{x} \subseteq X$. $f_{|X|}$ denotes a restriction of the function $f$ to a subset $X$ of its domain. A partition $\mathcal{P}$ of a set $S$ is a set $\set{P_1, P_2, \ldots}$ such that $P_i$ are pairwise disjoint nonempty sets (called equivalence classes) whose union is $A$. A class representative of $P_i$ is a distinguished element in $P_i$. $[s]_\mathcal{P}$ and $rep^\mathcal{P}(s)$ denote equivalence class and class representative of an element $s \in S$ in $\mathcal{P}$. $\blacksquare$ represents a proof that is %
available in the appendix.

Familiarity with regular languages and their properties is assumed. An observance of a \emph{symbol} is called an \emph{event}. A finite set of symbols is called an \emph{alphabet}.  $REG(\Sigma)$ is the set of all regular languages over an alphabet $\Sigma$. $\epsilon$ is the empty string, and $\Sigma_\epsilon$ is the alphabet $\Sigma \cup \set{\epsilon}$. A \emph{trace} is a (possibly infinite) sequence of events, and an \emph{execution} is a finite prefix of a trace. A trace $x$ is a \emph{continuation} of an execution $x'$ if $x' \prec x$.

\subsection{Finite-state Machines and Transducers}

\begin{definition}[Finite Automata]
A finite automaton is a 5-tuple $(Q, \Sigma, \delta, q_0,$ $F)$ where $Q$ is the finite set of states, $\Sigma$ is the alphabet, $q_0 \in Q$ is a specified initial state and $F \subseteq Q$ is the set of final states. A deterministic finite automaton (DFA) has the transition function $\delta: Q \times \Sigma \rightarrow Q$ and a nondeterministic finite automaton (NFA) has the transition function $\delta: Q \times \Sigma \rightarrow 2^Q$.
The transition function $\delta$ is lifted to strings, sets of strings, and sets of states. We call $L(A) = \set{ x \in \Sigma^* \mid \delta(q_0, x) \in F}$ the \emph{language} of the finite automaton.
\end{definition}

\begin{definition}[Nondeterministic Finite-State Transducers (NFTs)]
Defined as a NFA $(Q, \Sigma, \Gamma, \delta, q_0, F)$, where $\delta : Q \times \Sigma \rightarrow 2^{Q \times \Gamma_\epsilon}$. After observing a symbol $\sigma \in \Sigma$, the NFT in state $q$ transitions to a choice of $q'$ with output $\gamma \in \Gamma_\epsilon$ where $(q',~\gamma)$ is one of the  pairs in $\delta(q,~\sigma)$. 
\end{definition}

\subsection{Properties, monitors and related terminology}

\cref{sec:motivation} gave two examples of safetly properties modelled as DFAs. More precisely, we model properties using a minimum-state DFA with a special specified error state. We give related definitions here.

\begin{definition}[Finite-state property]
$\phi$ is a finite-state property if it is the minimum-state DFA $\phi = (Q, \Sigma, \delta, q_0, Q \setminus q_{err})$ with the specified error state $q_{err}$. The error state $q_{err}$ must be a \emph{trap state}, i.e. $\forall \sigma \in \Sigma,~\delta(q_{err},~\sigma) = q_{err}$. The notation $Q^\phi, \Sigma^\phi, \delta^\phi, q_0^\phi, q_{err}^\phi$ is used  to refer to $Q, \Sigma, \delta, q_0, q_{err}$ respectively for a property $\phi$. An execution $x \in \Sigma^*$ \emph{violates} the property \(\phi\) if $\delta(q_0, x) = q_{err}$. An execution $x$ that does not violate the property is \emph{non-violating}.
\end{definition}

\begin{remark}
$L(\phi)^C = \Sigma^* \setminus L(\phi)$ are all the strings that violate the property $\phi$.
If an execution violates a property, then so do all its continuations (because $q_{err}$ is a trap state).
\end{remark}

\begin{definition}[NFA property]
A NFA $\psi = (Q, \Sigma, \delta, q_0, Q \setminus q_{err})$ with the specified error state $q_{err}$ is an NFA property. The error state $q_{err}$ must be a \emph{trap state}. If $\psi$ is determinized to a minimum-state DFA $\phi$, then $\phi$ is a finite-state property with the error state $\set{q_{err}}$.
\end{definition}

For a given property $\phi$, a \emph{monitor} $\mon{\phi}$ is synthesized to observe the events that a program generates. The monitor keeps track of the \emph{current state} $q_{curr}$, which is initialized as $q_{curr} = q_0$ and is updated as $q_{curr} \gets \delta(q, \sigma)$ when the symbol $\sigma \in \Sigma$ is observed. A monitor $\mon{\phi}$ produces a \emph{true} verdict -- indicating that the property $\phi$ cannot be violated in any continuation of the observed execution, or a \emph{false} verdict -- indicating that the property has been violated. Till either the \emph{true} or \emph{false} verdict is reached, the verdict is \emph{inconclusive}. If there is a continuation of an execution which leads to the \emph{false} verdict, then the monitor's current state is \emph{monitorable}. In a finite automaton, monitorability of a state $q$ can be checked by checking existence of a path from $q$ to the error state. We use the terms ``monitor" and ``property" interchangeably (e.g. language of a monitor) when it is clear from the context.

\begin{remark}
For the analysis in the following sections, the existence of multiple monitors does not concern us. Therefore we omit any discussion of it. We discuss it when discussing a particular loss type in \cref{sec:more-examples}.
\end{remark}

\section{Losses, Alternate Monitors, and Superposed Monitors}
\label{sec:filters}

\begin{figure}[t]
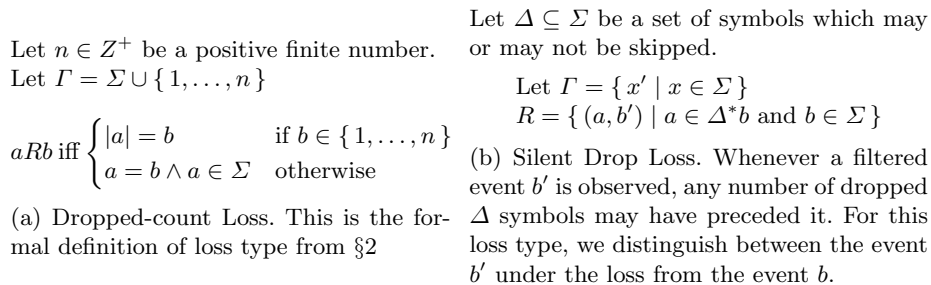

    \centering
    \begin{subfigure}[m]{0.48\textwidth}
    
    Let \(n \in Z^+\) be a positive finite number.\\
    Let \(\Gamma = \Sigma \cup \set{1, \ldots, n}\)\\
    
    $aRb$ iff $\begin{cases}\left| a \right| = b & \text{if } b \in \set{1, \ldots, n} \\ a = b \land a \in \Sigma & \text{otherwise} \end{cases}$
    \caption{Dropped-count Loss. This is the formal definition of loss type from \cref{fig:leading-safeiter-example}}
    \label{fig:dropped-count-filter}
    \end{subfigure}
    \hfill
    \begin{subfigure}[m]{0.50\textwidth}
    
    Let \(\Delta \subseteq \Sigma\) be a set of symbols which may or may not be skipped.\vspace{5pt}
    \begin{center}
    \begin{varwidth}{\textwidth}
    Let $\Gamma = \set{x' \mid x \in \Sigma}$ \\
    $R = \set{(a, b') \mid a \in \Delta^*b \text{ and } b \in \Sigma}$
    \end{varwidth}
    \end{center}
    \caption{Silent Drop Loss. Whenever a filtered event $b'$ is observed, any number of dropped $\Delta$ symbols may have preceded it. For this loss type, we distinguish between the event $b'$ under the loss from the event $b$.}
    \label{example:silent-drop-filter}
    \end{subfigure}
    \caption{Formally specified loss types}
    \label{fig:formal-definition-count-and-silent}
\end{figure}

\newcolumntype{M}[1]{>{\centering\arraybackslash}m{#1}}

We presented two loss models in \cref{sec:motivation}. In this section, we begin with related definitions and describe a class of monitors which can observe lossy streams. We discuss soundness and completeness of these monitors. We then construct optimal monitors to observe lossy streams for a given loss model, and discuss optimality of our construction.

We introduced loss models as a mapping between event symbols or sequence of symbols to alternate symbols. We thus represent a loss model as a relation.%

\begin{definition}[Loss Model]
Let $\Sigma$ and $\Gamma$ be finite alphabets. A \emph{loss model} is defined as a relation $R \subseteq \Sigma^* \times \Gamma$.
\end{definition}

A loss model gives the information about how a single alternate symbol may have been produced. If a symbol $\gamma$ is observed in the lossy stream, then it was produced in lieu of one of the sequence of symbols in $R^{-1}(\gamma)$. 

Consider the lossy stream $2 n 2$ from \ref{fig:safeiter-table} for the corresponding original stream $c n n u u$ ($O_2)$. The program is monitored incrementally, so as it runs, we first observe $2$ in lieu of $c n$, then $n$, and then $2$ in lieu of $u u$. For our theoretical analysis, we wish to address the entire history of how a lossy trace would have been observed, we do that by defining a partial function $f$ on all executions of the event stream, such that $f$ evaluates to the corresponding lossy execution. 

\begin{definition}[Filter and Lossy Streams]

Let $\Sigma$ and $\Gamma$ be finite alphabets. Consider a \emph{loss model} $R \subseteq \Sigma^* \times \Gamma$ . Then a partial function \(f: \Sigma^* \rightarrow \Gamma^*\) defined on all the prefixes of a trace is called a filter under $R$ if it satisfies the \emph{monotonicity} property, defined below:

if $f(x) = y$ and $f(x') \neq y$ for all proper prefixes $x'$ of $x$, then: 
$$f(x \cdot s) = \begin{cases} y \cdot \gamma  & \text{if } s R \gamma\\
y & \text{otherwise}
                               \end{cases}$$
In the first case $\gamma$ is called a \emph{replacement} for the \emph{segment} $s$ of the string $x \cdot s$.
                              
$\FR{R}$ is defined as the set of all possible functions which are filters under $R$.
If $\exists f \in \FR{R}$ such that $f(x) = y \land \nexists x' \prec x, f(x') = y$, then we call $x$ a \emph{completion} for $y$. $x$ is one of the possible executions which could have produced the lossy stream $y$. We define  $\FRinv{R}(y)$ as the set of all completions of $y$:
$$\FRinv{R}(y) \defeq \set{x \in \Sigma^* \mid \exists f \in \FR{R} \sst f(x) \allowbreak = \allowbreak y \allowbreak \land \nexists x' \prec x \land f(x') = y}$$
\end{definition}

A \emph{loss type} is a parameterization over a family of related loss models. Loss type for the loss model from \cref{fig:leading-safeiter-example} is given in \cref{fig:formal-definition-count-and-silent}.

In the next theorem, we see how we can determine the set of completions $\FRinv{R}(y)$ using just $R^{-1}$.

\begin{theorem}
\label{theorem:structure-of-filter-inverse}
For a string $y \in \Gamma^*$, $y = \gamma_1\gamma_2\ldots \gamma_k ~(\forall i~\gamma_i \in \Gamma)$:
$$\FRinv{R}(y) = R^{-1}(\gamma_1)\ldots R^{-1}(\gamma_k)$$
\end{theorem}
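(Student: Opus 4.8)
The plan is to prove the two inclusions by induction on $k=\abs{y}$, the inductive step peeling off the last output symbol $\gamma_k$. Writing $y=\gamma_1\cdots\gamma_k$ and $y'=\gamma_1\cdots\gamma_{k-1}$, and recalling that lifting concatenation to sets gives $R^{-1}(\gamma_1)\cdots R^{-1}(\gamma_k)=\set{s_1\cdots s_k\mid s_i\in R^{-1}(\gamma_i)}$, it suffices to establish, for every $k\ge 1$, the equivalence
$$x\in\FRinv{R}(y)\iff x=x'\cdot s\ \text{ for some }\ x'\in\FRinv{R}(y')\ \text{ and }\ s\in R^{-1}(\gamma_k).$$
The theorem then follows by induction, with base case $k=0$: $\FRinv{R}(\epsilon)=\set{\epsilon}$, since every filter maps $\epsilon$ to $\epsilon$, so for any $x\neq\epsilon$ the proper prefix $\epsilon$ is already mapped to $\epsilon$ and hence $x$ is not a completion of $\epsilon$.

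For the forward direction, let $x\in\FRinv{R}(y)$ be witnessed by a filter $f\in\FR{R}$, so $f(x)=y$ and $\nexists x''\prec x$ with $f(x'')=y$. Examining the values $f$ takes along the totally ordered chain of prefixes $\epsilon\preceq\cdots\preceq x$, the monotonicity property forces this sequence to be nondecreasing in the prefix order and to grow by at most one $\Gamma$-symbol at each step; hence every value is a prefix of $f(x)=y$, and there is a unique minimal prefix $x'$ with $f(x')=y'$, which satisfies $x'\prec x$ because $y'\neq y$. Write $x=x'\cdot s$. Applying the monotonicity clause at $x'$ --- minimal for $y'$ --- to the extension by $s$, and using $f(x'\cdot s)=f(x)=y'\gamma_k\neq y'$, puts us in the first case of the clause with replacement $\gamma_k$, so $s\,R\,\gamma_k$, i.e.\ $s\in R^{-1}(\gamma_k)$; and $f$ itself witnesses $x'\in\FRinv{R}(y')$.

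For the backward direction, suppose $x=x'\cdot s$ with $x'\in\FRinv{R}(y')$ witnessed by a filter $f'$ --- which we may take to be defined exactly on the prefixes of $x'$ --- and with $s\in R^{-1}(\gamma_k)$. I would let $f$ agree with $f'$ on the prefixes of $x'$, take the value $y'$ on every prefix strictly between $x'$ and $x$ (while the new segment $s$ is still being consumed), and the value $y'\gamma_k$ on $x$, thereby obtaining a filter defined on the prefixes of the trace $x$. One then verifies that $f$ satisfies the monotonicity condition --- the essential new instance is at $x'$, where the completed segment $s$ carries the replacement $\gamma_k$ precisely because $s\,R\,\gamma_k$ --- that $f(x)=y'\gamma_k=y$, and that $\nexists x''\prec x$ with $f(x'')=y$ (prefixes $\preceq x'$ map to prefixes of $y'$, hence differ from $y$ since $\gamma_k\neq\epsilon$, and prefixes strictly between $x'$ and $x$ map to $y'\neq y$). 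Hence $x\in\FRinv{R}(y)$.

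The step I expect to be the main obstacle is keeping the nondeterminism of filters under control. A filter may \emph{defer} emitting a replacement past the first moment its current segment becomes $R$-related to a symbol, so in the forward direction one cannot argue from a canonical greedy segmentation of $x$; one must instead reason purely from the values $f$ actually takes, which is exactly what the ``minimal prefix realising a given output'' device accomplishes. Dually, in the backward direction one must exhibit a \emph{concrete} filter realising a prescribed segmentation and check that it obeys monotonicity. A minor, essentially bookkeeping point is the empty string: the argument uses $f(\epsilon)=\epsilon$ for every filter and tacitly assumes no segment is empty --- equivalently, that $R$ relates the empty string to no symbol --- since otherwise the identity requires an extra caveat; I would record this as a standing assumption.
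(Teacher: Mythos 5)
Your proof is correct and takes essentially the same route as the paper's: both inclusions are established by identifying completions of $y$ with segmentations $x = x_1\cdots x_k$ where $x_i \in R^{-1}(\gamma_i)$. The only difference is one of rigor --- the paper's two-line proof simply asserts that such a partition exists and that a witnessing filter can be defined, whereas your induction with the minimal-prefix device and the explicit filter construction actually justifies those steps (and correctly isolates the definitional caveats about deferred emission, $f(\epsilon)=\epsilon$, and empty segments that the paper leaves implicit).
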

\begin{proof}

(LHS $\subseteq$ RHS) Let $x \in \FRinv{R}(y)$, then $\exists$ a partition $x = x_1 \ldots x_k$ such that $\gamma_1,\ldots, \gamma_{k}$ are a replacements for respective $x_i$. Then $x_i \in R^{-1}(\gamma_i)$ and thus $x \in R^{-1}(\gamma_1)\ldots R^{-1}(\gamma_{k})$.

(RHS $\subseteq$ LHS) Let $x \in R^{-1}(\gamma_1)\ldots R^{-1}(\gamma_k) \implies x = x_1 \ldots x_{k} \implies f(x) = \gamma_1 \ldots \gamma_k = y$.\qed
\end{proof}

We now start discussing monitors which observe lossy event streams.

\begin{definition}[Alternate monitor]
Given a \emph{primary} monitor $M_\phi$ and a loss model $R \subseteq \Sigma^* \times \Gamma$, an alternate monitor $M_\psi$ is any finite state monitor over the alphabet $\Gamma$ that observes the lossy execution $f(e)$ for any $f \in \mathcal{F}_R$ when $M_\phi$ observes the execution $e$. We call $(M_\phi, M_\psi)_R$ a \emph{primary-alternate monitor pair} and $(\phi, \psi)_R$ a \emph{primary-alternate property pair}.
\end{definition}

\begin{definition}[Soundness and Completeness for a primary-alternate property pair]
For a primary-alternate pair $(\phi, \psi)_R$, with the definition of $\FRinv{R}$ lifted to the set of strings, we define: \\[4pt]
\emph{Soundness:} A non-violating lossy stream must not have any violating completions, i.e. $y \in L(\psi) \implies \FRinv{R}(y) \subseteq L(\phi)$, equivalently $\FRinv{R}(L(\psi)) \subseteq L(\phi)$  \\[4pt]
\emph{Completeness:} A violating lossy stream must have all violating completions, i.e. $y \not\in L(\psi) \implies \FRinv{R}(y) \subseteq L(\phi)^C$, equivalently $\FRinv{R}(L(\phi)^C) \subseteq L(\phi)^C$  %
\end{definition}

\begin{definition}[Optimality for a primary-alternate property pair]
For a primary-alternate pair $(\phi, \psi^*)_R$ where $\psi^*$ is complete, $\psi^*$ is called optimal if for any other primary-alternate pair $(\phi, \psi)_R$, $L(\psi^*) \subseteq L(\psi)$, or equivalently $\FRinv{R}(L(\psi^*)) \subseteq \FRinv{R}(L(\psi))$
\end{definition}

\begin{remark} Our definition of Optimality is a strong definition. An alternate definition for an optimal monitor might be to count the number of strings up to any given length and define a monitor which reports a violation on maximum number of strings for every length as the optimal monitor, but optimality by our definition would imply optimality in this alternate definition. \end{remark}

It is useful to consider the primary-alternate pair $(M_\phi, M_\psi)_R$ as monitoring together for the purposes of theoretical analysis and for definitions. In practice, we want to monitor using just $M_\psi$.

So far we have only defined alternate monitors, but we have not revealed a strategy to construct them. Our strategy is to keep track of the set of states we could possibly be in. We will now define a special class of alternate monitors to do this.

\begin{definition}[Superposed alternate monitors]
Let $(M_\phi, M_\psi)_R$ be a \linebreak primary-alternate pair where $\phi = (Q, \Sigma, \delta^\phi, q_0, F), F = Q \setminus q_{err}$. $M_\psi$ is called a \emph{superposed} alternate monitor if  $\psi$ is the unique minimum-state DFA for the NFA property $\psi_N = (Q, \Gamma, \delta^\psi, q_0, F)$ where the transition function $\delta^\psi$ satisfies the \emph{superposed monitor condition}, given as follows:

Let $\psi$'s states be labelled by the subsets of $Q$ (this labelling is well-defined, see \cref{lemma:well-defined-NFA-labels} and \cref{remark:well-defined-powerset-minimized-labels} below).
When $M_\psi$ transitions to a state $S \subseteq Q$ and $M_\phi$ is in state $q$, then \(q \in S\). In other words, if $x \in \Sigma^*$ ends with a segment then $\delta^\phi(q_0, x) \in \delta^\psi(q_0, f(x))$ for any filter $f$ under $R$. A superposed monitor is in an \emph{imprecise} state if for its state \(S\), \(|S| \geq 2\).
\end{definition}

A NFA $A = (Q, \Sigma, \delta, q_0, F)$ is \emph{determinized} (converted to a DFA) $B = (2^Q,\Sigma,\delta,~q_0,F')$ where $F' = \set{S \subseteq Q \mid S \cap F \neq \phi} \sst L(A) = L(B)$ \cite{sipser13}. The determinized DFA $B$'s states are labelled by subsets of $Q$. This gives us a relationship between the states of the NFA and its corresponding DFA. Now we show that if we  further consider the minimum-state DFA for $L(B)$ (hence $L(A)$), we can still label its states by subsets of $Q$.

\begin{lemma}
\label{lemma:well-defined-NFA-labels}
In DFA minimization \cite{sipser13} of a determinized NFA, let $\mathcal{P}$ be the paritition of $2^Q$ where $\mathcal{S} \in \mathcal{P}$ represents a set of states merged together. If the states $S_1$ and $S_2$ merge, then the state $S_1 \cup S_2$ merges with them. i.e. $\forall~\mathcal{S} \in \mathcal{P}, \forall~S_1, S_2 \in \mathcal{S} \implies S_1 \cup S_2 \in \mathcal{S}$.\proofinappendix
\end{lemma}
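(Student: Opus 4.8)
The plan is to show that the partition $\mathcal{P}$ produced by DFA minimization is exactly the Myhill--Nerode equivalence on the states of the determinized automaton $B = (2^Q, \Sigma, \delta, q_0, F')$, and then to exploit two structural features of any determinized automaton: its transition function distributes over union of state sets, and its acceptance condition is ``disjunctive'' in those sets.

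First I would recall what ``merged'' means. In the minimization of $B$, two states $S_1, S_2 \in 2^Q$ lie in the same class $\mathcal{S} \in \mathcal{P}$ iff they are distinguishability-equivalent, i.e. for every $w \in \Sigma^*$ we have $\delta(S_1, w) \in F' \iff \delta(S_2, w) \in F'$; write $S_1 \equiv S_2$ for this relation, so that $\mathcal{S} = [S_1]_\mathcal{P}$ is its class. Here I read $\mathcal{P}$ as a partition of all of $2^Q$ (as the statement does), not merely of the reachable states, so that $S_1 \cup S_2$ always has a well-defined class.

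Next I would establish two facts about $B$. (i) \emph{Union-additivity of transitions:} for all $A, A' \subseteq Q$ and $w \in \Sigma^*$, $\delta(A \cup A', w) = \delta(A, w) \cup \delta(A', w)$. This follows by induction on $|w|$: the base case is immediate, and the inductive step uses that a lifted NFA step already satisfies $\delta(X, \sigma) = \bigcup_{q \in X}\delta(q,\sigma)$, which distributes over union of state sets, and composition along $w$ preserves this. (ii) \emph{Disjunctive acceptance:} since $F' = \set{S \subseteq Q \mid S \cap F \neq \emptyset}$, for any $A, A' \subseteq Q$ we have $A \cup A' \in F'$ iff $(A \cup A') \cap F \neq \emptyset$ iff $A \cap F \neq \emptyset$ or $A' \cap F \neq \emptyset$, i.e. iff $A \in F'$ or $A' \in F'$.

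Finally I would combine these: fix $\mathcal{S} \in \mathcal{P}$ and $S_1, S_2 \in \mathcal{S}$, so $S_1 \equiv S_2$. For arbitrary $w \in \Sigma^*$, fact (i) gives $\delta(S_1 \cup S_2, w) = \delta(S_1, w) \cup \delta(S_2, w)$, and then fact (ii) gives that this set is in $F'$ iff $\delta(S_1, w) \in F'$ or $\delta(S_2, w) \in F'$; since $S_1 \equiv S_2$ the two disjuncts coincide, so it is in $F'$ iff $\delta(S_1, w) \in F'$. As $w$ was arbitrary, $S_1 \cup S_2 \equiv S_1$, hence $S_1 \cup S_2 \in [S_1]_\mathcal{P} = \mathcal{S}$, which is the claim. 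The argument is short and I do not anticipate a genuine obstacle; the only point needing care is the bookkeeping around the minimization procedure, namely pinning down that ``being merged'' is precisely the distinguishability relation and that $\mathcal{P}$ is the full powerset partition, after which everything reduces to the two one-line facts (i) and (ii).
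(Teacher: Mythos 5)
Your proof is correct and follows essentially the same route as the paper's: both arguments reduce the claim to the two facts that the determinized transition function distributes over unions of state sets and that acceptance ($S \cap F \neq \emptyset$) is disjunctive in the union. The only difference is presentational — the paper argues by contradiction with a two-case split on which of $\delta(S_1\cup S_2,x)$ and $\delta(S_2,x)$ is accepting, whereas you give the equivalent direct chain of biconditionals.
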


\begin{remark}
\label{remark:well-defined-powerset-minimized-labels}
Using the previous lemma, for each class $[S]_\mathcal{P}$ of states, the class representative $rep^\mathcal{P}(S)$ of $S$ is defined as $\cup_{T \in [S]_P}T$. We label the resultant state from merged states in $[S]_P$ in the minimized DFA by $rep^{P}(S)$. %
\end{remark}

The intuition behind the superposed monitor condition is that we are always over-approximating the state (by a set of states) which we would have been in when monitoring without losses. Due to this over-approximating nature, superposed monitors are always complete, as proved next.

\begin{theorem}
\label{theorem:all-superposed-complete}
All superposed monitors are complete.\end{theorem}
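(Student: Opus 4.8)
The plan is to unwind the definition of completeness directly from the \emph{superposed monitor condition}, with essentially no extra work. Recall that completeness of a primary-alternate pair $(\phi,\psi)_R$ asks that $y \notin L(\psi) \implies \FRinv{R}(y) \subseteq L(\phi)^C$. So fix a superposed pair $(M_\phi, M_\psi)_R$ with $\phi = (Q, \Sigma, \delta^\phi, q_0, Q \setminus q_{err})$, fix $y \in \Gamma^*$ with $y \notin L(\psi)$, and fix an arbitrary completion $x \in \FRinv{R}(y)$; the goal is to derive $\delta^\phi(q_0, x) = q_{err}$.

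The first step is to place $\delta^\phi(q_0,x)$ inside the subset of $Q$ that labels the state $M_\psi$ reaches on $y$. Writing $y = \gamma_1\cdots\gamma_k$ with $k \geq 1$, \cref{theorem:structure-of-filter-inverse} gives $x \in R^{-1}(\gamma_1)\cdots R^{-1}(\gamma_k)$, so $x = x_1\cdots x_k$ with $x_i \in R^{-1}(\gamma_i)$; taking the filter $f \in \FR{R}$ that witnesses $x \in \FRinv{R}(y)$, the string $x$ ends with the segment $x_k$ (whose replacement is $\gamma_k$). Hence the superposed monitor condition applies to $x$ and yields $\delta^\phi(q_0,x) \in \delta^\psi(q_0, f(x)) = \delta^\psi(q_0, y)$, where $\delta^\psi(q_0,y) \subseteq Q$ denotes the subset labelling the state of $M_\psi$ after reading $y$. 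The edge case $y = \epsilon$ I would dispatch by hand: $\FRinv{R}(\epsilon) = \set{\epsilon}$, $\delta^\phi(q_0,\epsilon) = q_0$, and the label of $\psi$'s initial state contains $q_0$ because $\set{q_0}$ is among the determinized states merged into it (\cref{lemma:well-defined-NFA-labels}, \cref{remark:well-defined-powerset-minimized-labels}), so the same containment still holds.

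The second step translates the hypothesis $y \notin L(\psi)$ into a bound on that label. Since $\psi$ is the minimum-state DFA of the NFA property $\psi_N = (Q, \Gamma, \delta^\psi, q_0, Q \setminus q_{err})$, a determinized state $S \subseteq Q$ is accepting iff $S \cap (Q \setminus q_{err}) \neq \emptyset$, i.e. iff $S \not\subseteq \set{q_{err}}$; and DFA minimization never merges an accepting state with a non-accepting one, so by \cref{lemma:well-defined-NFA-labels} and \cref{remark:well-defined-powerset-minimized-labels} (whose merged-state labels are unions of determinized states) a state of $\psi$ is non-accepting precisely when its subset label is contained in $\set{q_{err}}$. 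Thus $y \notin L(\psi)$ forces $\delta^\psi(q_0, y) \subseteq \set{q_{err}}$, and chaining with the first step gives $\delta^\phi(q_0,x) \in \delta^\psi(q_0,y) \subseteq \set{q_{err}}$, hence $\delta^\phi(q_0,x) = q_{err}$ and $x \in L(\phi)^C$. As $x$ was an arbitrary element of $\FRinv{R}(y)$, this proves $\FRinv{R}(y) \subseteq L(\phi)^C$, which is exactly completeness.

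The genuine content sits entirely in the superposed monitor condition, so there is no hard computational step. The one place that demands care is the bookkeeping of the second step — identifying ``$\psi$ is in a non-accepting state'' with ``the subset label of that state lies in $\set{q_{err}}$'' — which is precisely what \cref{lemma:well-defined-NFA-labels} and \cref{remark:well-defined-powerset-minimized-labels} are for, together with the standard fact that minimization respects the accepting/non-accepting partition. The only other subtlety is confirming that every completion in $\FRinv{R}(y)$ ``ends with a segment'', so that the superposed condition is applicable at all; this is immediate from \cref{theorem:structure-of-filter-inverse} when $y \neq \epsilon$ and is checked by hand for $y = \epsilon$.
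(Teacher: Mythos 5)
Your proposal is correct and follows essentially the same route as the paper's proof: $y \notin L(\psi)$ forces the subset label of $\psi$'s state to be contained in $\set{q_{err}}$, the superposed monitor condition places $\delta^\phi(q_0,x)$ inside that label for every completion $x$, and hence every completion violates $\phi$. You simply make explicit several steps the paper leaves implicit (the segment decomposition via \cref{theorem:structure-of-filter-inverse} needed to invoke the superposed condition, the fact that minimization preserves the accepting/non-accepting split of labels, and the $y=\epsilon$ edge case), which is careful elaboration rather than a different argument.
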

\begin{proof} We give a direct proof. For a superposed monitor $M_\psi$ in state $q$:
\begin{align*}
 y \not\in L(M_\psi) \implies & \delta'(q_0, y) = \set{q_{err}} 
   \implies q \in \set{q_{err}} \implies q = q_{err} 
\\ \implies & \forall x \in \FRinv{R}(y), \ \delta(q, x) = \set{q_{err}} \implies \FRinv{R}(y) \subseteq L(M)^C\tag*{\qed}
\end{align*}
\end{proof}

\begin{definition}[$L_{opt}(\phi, R)$]
For a property $\phi$ and loss model $R$, $L_{opt}(\phi, R) \defeq \FR{R}(L(\phi))$, i.e. $L_{opt}$ is the set of lossy strings in $\Gamma^*$ produced by a non-error execution in $\Sigma^*$. $L_{opt}$ is the smallest set of strings on which a complete alternate monitor cannot reach a false verdict. i.e. $L_{opt} = \FR{R}(L(\phi)) = \set{y \mid \FRinv{R}(y) \cap L(\phi) \neq \varnothing}$.
\end{definition}

Our next theorem gives the construction of an optimal monitor, along with proof of optimality.
\begin{theorem}
\label{theorem:optimality}
Given a property $\phi$ and a loss model $R$, we construct the NFA property $\psi^*$ as the superposed monitor whose transition function $\delta^\psi$ is defined as $\forall q \in Q,~\delta^\psi(q, y) = \delta(q, R^{-1}(y))$. $\psi^*$ recognizes $L_{opt}$.
\end{theorem}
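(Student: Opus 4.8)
The plan is to show the two set inclusions $L(\psi^*) \subseteq L_{opt}$ and $L_{opt} \subseteq L(\psi^*)$ separately, using \cref{theorem:structure-of-filter-inverse} to translate between the filter-inverse $\FRinv{R}$ and the transition function $\delta^\psi$. The bridge between these two is the following observation, which I would establish first as the technical core: for every $y \in \Gamma^*$, $\delta^\psi(q_0, y) = \delta(q_0, \FRinv{R}(y))$, where the right-hand side is $\delta$ lifted to a set of strings. This follows by induction on $|y|$. For $y = \epsilon$ it is immediate since $\FRinv{R}(\epsilon) = \{\epsilon\}$. For the inductive step, write $y = y' \gamma$; using the definition $\delta^\psi(q, \gamma) = \delta(q, R^{-1}(\gamma))$ and the inductive hypothesis $\delta^\psi(q_0, y') = \delta(q_0, \FRinv{R}(y'))$, we get $\delta^\psi(q_0, y) = \delta\bigl(\delta(q_0, \FRinv{R}(y')),\, R^{-1}(\gamma)\bigr) = \delta\bigl(q_0,\, \FRinv{R}(y') \cdot R^{-1}(\gamma)\bigr)$, and by \cref{theorem:structure-of-filter-inverse} the concatenation $\FRinv{R}(y') \cdot R^{-1}(\gamma)$ is exactly $\FRinv{R}(y)$.

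Given that identity, the rest is a chase through the definitions. A string $y$ is accepted by $\psi^*$ iff the NFA property reaches a non-error state, i.e. iff $\delta^\psi(q_0, y) \neq \{q_{err}\}$, i.e. (since $q_{err}$ is a trap state and $\delta^\psi(q_0,y)$ is the image of a nonempty set of executions — note $\FRinv{R}(y)$ is nonempty whenever $y$ is in the range of some filter) iff $\delta^\psi(q_0, y) \not\subseteq \{q_{err}\}$. By the identity above this says $\delta(q_0, \FRinv{R}(y)) \not\subseteq \{q_{err}\}$, i.e. there exists $x \in \FRinv{R}(y)$ with $\delta(q_0, x) \neq q_{err}$, i.e. $\FRinv{R}(y) \cap L(\phi) \neq \varnothing$. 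By the characterization of $L_{opt}$ recorded in the definition, $L_{opt} = \{ y \mid \FRinv{R}(y) \cap L(\phi) \neq \varnothing\}$, so this is precisely $y \in L_{opt}$. Hence $L(\psi^*) = L_{opt}$.

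There are two points that need care rather than routine bookkeeping. The first is making sure the acceptance condition is read correctly: $\psi^*$ is built as the minimum-state DFA of the NFA property $\psi_N$, whose final states are $Q \setminus q_{err}$; after determinization and the minimization of \cref{lemma:well-defined-NFA-labels}, a state labelled $S \subseteq Q$ is non-accepting exactly when $S = \{q_{err}\}$, so I must phrase the DFA acceptance of $y$ in terms of the subset-label $\delta^\psi(q_0,y)$ and confirm it equals $\{q_{err}\}$ iff $y \notin L(\psi^*)$. The second, and the step I expect to be the main obstacle, is handling the empty / undefined cases of the filter cleanly: $\FRinv{R}(y)$ could in principle be empty for strings $y$ not in the range of any filter, and I need to confirm that in that case $y \notin L_{opt}$ and also $y \notin L(\psi^*)$ — which requires observing that $\delta^\psi(q_0,y) = \delta(q_0,\varnothing) = \varnothing$ and arguing that the subset-construction/minimization sends the empty label to (and collapses it with) the $\{q_{err}\}$ sink, so that $y$ is indeed rejected. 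Once the correspondence $\delta^\psi(q_0,\cdot) = \delta(q_0, \FRinv{R}(\cdot))$ is nailed down together with this edge-case, the theorem falls out.
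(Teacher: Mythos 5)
Your proof is correct and follows essentially the same route as the paper's: the core in both is the identity $\delta^\psi(q_0,y) = \delta(q_0,\FRinv{R}(y))$, obtained by unfolding the transition function letter by letter and invoking \cref{theorem:structure-of-filter-inverse}, after which membership in $L_{opt}$ is read off from the definition $L_{opt} = \set{y \mid \FRinv{R}(y) \cap L(\phi) \neq \varnothing}$. The only differences are organizational — the paper discharges the direction $y \notin L(\psi^*) \Rightarrow y \notin L_{opt}$ by citing \cref{theorem:all-superposed-complete} instead of reusing the identity, and it does not explicitly address the acceptance condition or the $\FRinv{R}(y)=\varnothing$ edge case, both of which you treat more carefully.
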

\begin{proof}
\textit{Subproof 1: $y \not\in L(\psi^*) \implies y \not\in L_{opt}$.} This is the same as the completeness criterion and is implied by \cref{theorem:all-superposed-complete}.

\textit{Subproof 2: $y \in L(\psi^*) \implies y \in L_{opt}$.} Consider $y \in L(\psi^*)$.
\begin{align*}
\implies & \delta^\psi(\set{q_0}, y) \neq \set{q_{err}} \\
\implies & \delta^\psi(\ldots\delta^\psi(\delta^\psi(\set{q_0}, y_1), y_2)\ldots), y_n) \neq \set{q_{err}} \\
\implies & \delta^\phi(\ldots\delta^\phi(\delta^\phi(\set{q_0}, R^{-1}(y_1)), R^{-1}(y_2))\ldots), R^{-1}(y_n)) \neq \set{q_{err}} \\
\implies & \delta^\phi(\set{q_0}, \FRinv{R}(y)) \neq \set{q_{err}} \implies y \in L_{opt} & \tag*{\qed}
\end{align*}
\end{proof}

\begin{corollary}
A property $\phi$ is monitorable under loss model $R$ iff state $\tset{q_{err}}$ is reachable in $\psi^*$.
\end{corollary}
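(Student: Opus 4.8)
The plan is to reduce ``$\phi$ is monitorable under $R$'' to reachability of the state $\set{q_{err}}$ in $\psi^*$ by unfolding the definition of monitorability and invoking \cref{theorem:optimality}. Recall that a monitor's state is monitorable if some continuation drives it to the \emph{false} verdict, and that $\psi^*$, being the minimum-state DFA of an NFA property, has $\set{q_{err}}$ as its unique error state, which is moreover a trap state; consequently a lossy stream $y$ pushes $M_{\psi^*}$ to the \emph{false} verdict precisely when the run of $y$ ends in $\set{q_{err}}$, equivalently when $y \notin L(\psi^*)$. So the initial state $\set{q_0}$ of $\psi^*$ is monitorable iff there is a path from $\set{q_0}$ to $\set{q_{err}}$ in $\psi^*$, i.e. iff $\set{q_{err}}$ is reachable in $\psi^*$. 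It then remains to argue that ``$\phi$ is monitorable under $R$'' --- read as ``some complete alternate monitor for $\phi$ under $R$ has a monitorable initial state'' --- coincides with ``$\psi^*$'s initial state is monitorable'', and this is exactly where the optimality of $\psi^*$ enters.

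For the forward direction, suppose $\phi$ is monitorable under $R$: there is a complete primary-alternate pair $(\phi, \psi)_R$ and a lossy stream $y$ with $y \notin L(\psi)$. By \cref{theorem:optimality} we have $L(\psi^*) = L_{opt}$, and $L_{opt} \subseteq L(\psi)$ holds for every complete alternate monitor $\psi$ (as observed right after the definition of $L_{opt}$); hence $y \notin L(\psi^*)$, so the run of $y$ in $\psi^*$ ends in $\set{q_{err}}$ and that state is reachable. Conversely, if $\set{q_{err}}$ is reachable via some $y$, then $y \notin L(\psi^*) = L_{opt}$, so $M_{\psi^*}$ --- which is complete by \cref{theorem:all-superposed-complete}, $\psi^*$ being a superposed monitor --- reaches the \emph{false} verdict on $y$, witnessing that $\phi$ is monitorable under $R$. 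An equivalent one-line phrasing: $\set{q_{err}}$ is reachable in $\psi^*$ iff $L(\psi^*) \neq \Gamma^*$ iff $L_{opt} \neq \Gamma^*$ iff some lossy stream has all of its completions in $L(\phi)^C$.

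The only real obstacle is definitional: pinning down what ``$\phi$ is monitorable under $R$'' should mean. The natural choice is monitorability of the initial state of the optimal complete alternate monitor, and the justification that this is the right notion --- rather than, say, quantifying over arbitrary (possibly unsound) monitors --- is precisely the optimality of $\psi^*$: any complete monitor that can ever declare a violation is subsumed by $\psi^*$, and $\psi^*$ itself is complete. Once this reading is adopted, the corollary follows immediately from \cref{theorem:optimality} together with the structural fact that a minimized NFA property carries the single trap state $\set{q_{err}}$, so that ``reaching the \emph{false} verdict'' and ``reaching $\set{q_{err}}$'' name the same event.
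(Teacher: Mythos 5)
Your proposal is correct and follows exactly the route the paper intends: the corollary is stated without proof as an immediate consequence of \cref{theorem:optimality} ($L(\psi^*)=L_{opt}$) together with completeness of superposed monitors, and your two directions (any complete monitor's violation report forces $y\notin L_{opt}$, hence reachability of $\set{q_{err}}$; conversely a witness string for reachability is a legitimate violation report by the complete monitor $\psi^*$) spell out precisely that argument. Your observation that the content of the corollary hinges on reading ``monitorable under $R$'' via the optimal complete alternate monitor is also the reading the paper presupposes.
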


\begin{remark} \label{remark:polytime-rinv} Because $R$ can be arbitrary, this construction is only valid if $R^{-1}(y)$ and $\delta^\phi(q, R^{-1}(y))$ are computable. If $R$ is representable by a NFT, then both of these are polynomial time computable. \proofinappendix \end{remark}

\begin{figure}[t]
    \centering
    \begin{subfigure}[m]{0.38\textwidth}
	$\delta^{\psi*}(q, x) $\begin{flushright} = $ \begin{cases}  \delta(q, \Sigma^x) & if\ x \in \tset{1 \ldots n}\\  \delta(q, x) & otherwise \end{cases}$ \end{flushright}
	\caption{Disabling monitoring for up to n events, defined on filter from \cref{fig:dropped-count-filter}}
	\label{example:disabling-monitoring-for-up-to-n-events}
	\end{subfigure}
    \hfill
	\begin{subfigure}[m]{0.60\textwidth}
	\centerline{$\delta^{\psi*}(S, x) = C_\Delta(\delta(C_\Delta(S), x))$}
	$C_\Delta$ is the $\Delta$-closure of the set of states $S$ in $M$, i.e. set of all states which can be reached from states in $S$ by following 0 or more $y$-transitions, where $y \in \Delta$
	\caption{Silent drop monitor, defined on $\Delta$ and the filter from \cref{example:silent-drop-filter}}
	\label{fig:silent-drop-monitor}
	\end{subfigure}
	\caption{Example constructions of $\delta^{\psi^*}$ for $(\phi, \psi^*)_R$ as in \cref{theorem:optimality}}
	\label{fig:example-monitors}
\end{figure}

\cref{fig:example-monitors} and the next section show example optimal monitor constructions.

We have given a liberal definition for a loss model, it can be an arbitrary relation between $\Sigma^*$ and $\Gamma$. We now show that all loss models of interest can actually be represented by a more restricted definition -- a relation which must be representable by a NFT.

\begin{theorem}
\label{theorem:rational-filter}
Let $(\phi, \psi^*)_R$ be the primary-alternate property pair as constructed in \cref{theorem:optimality} where $R$ may not be representable by NFT. Then there exists a loss model $R'$ which can be represented as a NFT for which the constructed alternate property is also $\psi^*$.
\end{theorem}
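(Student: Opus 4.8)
The plan is to notice that the monitor $\psi^*$ produced by the construction in \cref{theorem:optimality} depends on $R$ through only a finite amount of data, and then to replace $R$ by a finite relation carrying exactly that data. The NFA property underlying $\psi^*$ is $\psi_N = (Q, \Gamma, \delta^\psi, q_0, F)$, and $\delta^\psi$ is fixed on single symbols by $\delta^\psi(q, \gamma) = \delta^\phi(q, R^{-1}(\gamma))$ for $q \in Q$, $\gamma \in \Gamma$; everything downstream --- lifting $\delta^\psi$ to strings, determinizing, minimizing, and the subset labelling of states guaranteed by \cref{lemma:well-defined-NFA-labels} --- is canonical and uses no further information about $R$. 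Hence, if $R'$ is any loss model with
\[\delta^\phi(q, (R')^{-1}(\gamma)) = \delta^\phi(q, R^{-1}(\gamma)) \text{ for all } q \in Q,\ \gamma \in \Gamma,\]
then the superposed monitor built from $R'$ has the same $\psi_N$, hence is literally the same DFA $\psi^*$.

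Next I would build such an $R'$ that is finite. Fix $\gamma \in \Gamma$. Since $\delta^\phi$ is a DFA transition function, $\delta^\phi(q, R^{-1}(\gamma)) = \set{\delta^\phi(q, x) \mid x \in R^{-1}(\gamma)}$ is a subset of the finite set $Q$, so for each ordered pair $(q, q')$ with $q' \in \delta^\phi(q, R^{-1}(\gamma))$ I can choose a single witness string $x_{q,q',\gamma} \in R^{-1}(\gamma)$ satisfying $\delta^\phi(q, x_{q,q',\gamma}) = q'$. Let $W_\gamma$ be the set of all these witnesses, so $\abs{W_\gamma} \le \abs{Q}^2$, and $W_\gamma = \varnothing$ when $R^{-1}(\gamma) = \varnothing$. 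Because $W_\gamma \subseteq R^{-1}(\gamma)$ we get $\delta^\phi(q, W_\gamma) \subseteq \delta^\phi(q, R^{-1}(\gamma))$ for every $q$, and the reverse inclusion holds by the choice of witnesses, so equality holds. Now put $R' = \bigcup_{\gamma \in \Gamma}(W_\gamma \times \set{\gamma})$. This is a finite relation on $\Sigma^* \times \Gamma$, hence a loss model, with $(R')^{-1}(\gamma) = W_\gamma$, so it satisfies the displayed condition above and therefore induces the same $\psi^*$.

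Finally I would check that a finite relation $R' \subseteq \Sigma^* \times \Gamma$ is representable by an NFT. This is immediate: take one path per pair $(x, \gamma) \in R'$ that reads $x$ letter by letter with empty output and emits $\gamma$ on the final transition into an accepting state, and glue all these paths at a common initial state (paths sharing a prefix may be merged, but that is not needed). Its relation is exactly $R'$, which is thus rational. Assembling the three steps gives the claim.

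The calculational parts --- choosing the witnesses and drawing the NFT --- are routine; the one step that needs care is the first paragraph, namely verifying that the whole pipeline from $R$ to the concrete minimized DFA $\psi^*$ genuinely factors through the finite table $(q,\gamma) \mapsto \delta^\phi(q, R^{-1}(\gamma))$, so that two loss models agreeing on this table yield the same $\psi^*$ rather than merely language-equivalent monitors.
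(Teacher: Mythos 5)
Your proposal is correct, and it shares the paper's first move --- reducing the claim to exhibiting an NFT-representable $R'$ with $\delta^\phi(q, R'^{-1}(\gamma)) = \delta^\phi(q, R^{-1}(\gamma))$ for all $q \in Q$ and $\gamma \in \Gamma$, since the entire pipeline from that finite table to the minimized DFA $\psi^*$ is canonical --- but it realizes $R'$ in a genuinely different way. The paper takes, for each $\gamma$ with non-regular preimage, the language $L_\gamma = \bigcap_{q \in Q}\set{x \in \Sigma^* \mid \delta^\phi(q,x) \in \delta^\phi(q, R^{-1}(\gamma))}$, which is regular by closure under intersection and is a \emph{superset} of $R^{-1}(\gamma)$ (indeed the largest language inducing the same table entries); it then packages these languages into a generalized transducer and converts that to an NFT. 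You instead pick a finite \emph{subset} of witnesses $W_\gamma \subseteq R^{-1}(\gamma)$ of size at most $\abs{Q}^2$, one per realized pair $(q,q')$, and observe that any finite relation on $\Sigma^* \times \Gamma$ is rational. Your route is more elementary --- it needs no GNFA/GNFT machinery and no appeal to closure properties --- and it additionally bounds the size of $R'$. What the paper's choice buys is that $R' \supseteq R$, so every completion under the original loss model remains a completion under $R'$; your $R'$ is a sub-relation of $R$, which is immaterial for the theorem as stated (only the constructed $\psi^*$ must agree) but is the less faithful of the two as a stand-in for the original loss semantics. One small point to tidy in your NFT construction: to serve as a loss filter the transducer must process a stream of segments, so the accepting endpoints of your witness paths should loop back to (or be identified with) the initial state, mirroring the self-loop structure in the paper's GNFT; this is routine and does not affect correctness.
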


The following definitions are required for the proof of \cref{theorem:rational-filter}.

\begin{definition}[Generalized Nondeterministic Finite Automaton \cite{Han04}]
A generalized nondeterministic finite automaton \emph{(GNFA)} is a 5-tuple $(Q, \Sigma, \delta, q_0, f)$, where $Q$ is the finite set of states, $\Sigma$ is the alphabet, $\delta \subseteq (Q \setminus f) \times (Q \setminus q_0) \rightarrow REG$ is the transition function, and $q_0, f \in Q$ are the specified initial and final states.
\end{definition}
\begin{remark}
A GNFA can be converted to a NFA \cite{sipser13}.
\end{remark}

\begin{definition}[Generalized Nondeterministic Finite-State Transducers (GNFTs)]
Defined as a GNFA $(Q, \Sigma, \Gamma, \delta, q_0, f)$, where $\delta : (Q \setminus f) \times (Q \setminus q_0) \rightarrow 2^{REG(\Sigma) \times \Gamma}$. After observing a string $x \in \Sigma^*$, the NFT in state $q$ transitions to a choice of $q'$ with output $\gamma \in \Gamma$ where $(r,~\gamma)~s.t.~x \in L(r)$ is one of the pairs in $\delta(q,~q')$.
\end{definition}
\begin{remark}
A GNFT can be converted to a NFT. A sketch of this conversion which works by expanding each transition follows. Take a transition $(r,~\gamma) \in \delta(q_1,~q_2)$. The regex $r$ can be converted to a NFA $A = (Q', \Sigma, \delta', q_0', f')$ with a single accept state by standard algorithms. Now this NFA can be embedded in the GNFT in place of the transition $(r,~\gamma) \in \delta(q_1,~q_2)$ by merging the state $q_1$ with $q_0'$ and adding the output $\gamma$ to all transitions into $f'$ and merging $f'$ with $q_2$. Finally, remove the transition $(r,~\gamma)$ from $\delta(q_1,~q_2)$.
\end{remark}

\begin{proof}[\cref{theorem:rational-filter}]
Let $\delta$ be the tranistion function for $\phi$. The construction in \cref{theorem:optimality} uses $\delta(q,~R^{-1}(\gamma))$ for defining $\psi^*$'s transition function. Therefore it is sufficient to produce an $R'$ representable by a NFT such that $\delta(q,~R'^{-1}(\gamma)) = \delta(q,~R^{-1}(\gamma))~\forall\gamma \in \Gamma$. 

Consider a symbol $\gamma \in \Gamma$.

\paragraph{\textbf{Case 1:} ($R^{-1}(\gamma)$ is regular)} We define $xR'\gamma~\forall xR\gamma$. So $R'^{-1}(\gamma) = R^{-1}(\gamma)$ and thus  $\delta(q,~R'^{-1}(\gamma)) = \delta(q,~R^{-1}(\gamma))$

\paragraph{\textbf{Case 2:} ($R^{-1}(\gamma) $ is not regular)}

We'll use the shorthand $f(q): Q \rightarrow 2^Q$ for $f_y(q) =_{def} \delta(q, R^{-1}(\gamma))$. 

Consider $l(q): Q \rightarrow REG(\Sigma)$, the regular language taking us from $q$ to $f(q)$, i.e. $l(q) =_{def} \set{ x \mid x \in \Sigma^* \land \delta(q, x) \in f(q)}$.

Let $L_\gamma \defeq \cap_{q \in Q}~l(q)$. It follows that $L_\gamma$ is regular since regular languages are closed under intersection.

We note that $R^{-1}(\gamma) \subseteq l(q)~\forall q \in Q$, and thus $R^{-1}\gamma \subseteq L_\gamma$.

We define $xR'\gamma~\forall x \in L_\gamma$. It is left to prove that $\delta(q, L_\gamma) = f(q)~\forall q \in Q$.

We show $\delta(q, L_y) \subseteq f(q)\ \forall q\ \because L_y \subseteq l(q)$ and $\delta(q, L_y) \supseteq f(q)\ \forall q \in Q\ \because  R^{-1}(y) \subseteq L_y$.

We construct a GNFT for $R'$. Consider a GNFT with states $\set{q_0, q, f}$, $\epsilon$-transitions from $q_0$ to $q$ and $q$ to $f$, and self loop edges on $q$ $\forall \gamma \in \Gamma$ with input label as the regex of $R^{-1}(\gamma)$ and output label $\gamma$. This completes the construction.\qed
\end{proof}

Moreover, we have also shown that an even more relaxed definition of loss model (a relation between $\Sigma^*$ and $\Gamma^*$ (instead of  $\Sigma^*$ and $\Gamma$) will not increase the number of loss models we can express. The two crucial results below complete our claim that \emph{any} loss type (arbitrary relation between original and alternate symbols) for which the final produced property has to be be finite-state, is representable in our framework.

If $R$ is defined over $\Sigma^* \times \Gamma^*$, we can always come up with a $R' \subseteq \Sigma^* \times \Gamma'$ and alter the transitions of the alternate automata such that transitions for same input strings in $\Sigma^*$ retain the same semantics for state transitions as the unaltered automata. We prove this in the next theorem.

\begin{theorem}
Let $N$ be a finite state monitor over a filter $f:\Sigma^* \rightarrow \Gamma^*$ under $R$. Then $\exists$ $R'$ with a finite range such that $\exists f':\Sigma^* \rightarrow \Gamma^*$, a filter over $R'$ such that $ \forall q \in Q\ \forall x \in \Sigma^*\ \delta(q, f(x)) = \delta(q, f'(x))$.
\end{theorem}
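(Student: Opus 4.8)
The plan is to exploit the finiteness of the monitor $N = (Q,\Gamma,\delta,q_0,F)$. Although the range of $R$ may be an infinite subset of $\Gamma^*$, every output string $w$ affects $N$ only through the state transformation $\widehat{w}\colon Q\to Q$ given by $\widehat{w}(q)=\delta(q,w)$, and there are at most $|Q|^{|Q|}$ such transformations. So I would first set $T=\{\widehat{w}\mid w\in\Gamma^*\text{ lies in the range of }R\}$, a finite set, choose for each $t\in T$ a representative string $w_t$ in the range of $R$ with $\widehat{w_t}=t$, and put $\Gamma'=\{w_t\mid t\in T\}$. Thus $\Gamma'$ is finite and will serve as the finite range of $R'$, each of its elements being treated as a single alternate ``symbol'' (equivalently, one may pass to a fresh finite alphabet in bijection with $T$ together with an altered alternate monitor whose transition on the symbol for $t$ is $t$ itself).

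Next I would define $R'\subseteq\Sigma^*\times\Gamma'$ by: $s\mathrel{R'}w_t$ iff there is some $w$ with $s\mathrel{R}w$ and $\widehat{w}=t$. The key bookkeeping observation is that $\operatorname{dom}(R')=\operatorname{dom}(R)$, so a filter under $R'$ is obliged to recognise a segment at exactly the prefixes where a filter under $R$ would. I would then define $f'$ to mirror $f$ segment by segment: by the monotonicity property, $f$ induces on each prefix $x$ a factorisation $x=s_1\cdots s_m\cdot r$ into recognised segments $s_i$ with chosen replacements $w_i$ (so $s_i\mathrel{R}w_i$ and $f(x)=w_1\cdots w_m$) followed by an as-yet-incomplete tail $r$; set $f'(x)=w_{\widehat{w_1}}\cdots w_{\widehat{w_m}}$, using the very same segment boundaries.

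It then remains to check two things. First, that $f'\in\mathcal{F}_{R'}$: because $\operatorname{dom}(R')=\operatorname{dom}(R)$, $f'$ commits at precisely the prefixes where $f$ commits, and at each commitment the appended symbol $w_{\widehat{w_i}}$ is a legal replacement for $s_i$ since $s_i\mathrel{R}w_i$ yields $s_i\mathrel{R'}w_{\widehat{w_i}}$; the monotonicity clauses for $f'$ are thus inherited from those for $f$. Second, the transition equality: since $\delta(q,uv)=\delta(\delta(q,u),v)$, lifting to strings gives $\delta(q,f(x))=(\widehat{w_m}\circ\cdots\circ\widehat{w_1})(q)$ and $\delta(q,f'(x))=(\widehat{w_{\widehat{w_m}}}\circ\cdots\circ\widehat{w_{\widehat{w_1}}})(q)$, and these coincide because $\widehat{w_t}=t$ for all $t$, in particular $\widehat{w_{\widehat{w_i}}}=\widehat{w_i}$ for each $i$. (Under the fresh-alphabet formulation the same computation goes through with the altered monitor's transition function in place of $\delta$ on the right-hand side.)

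I expect the main obstacle to be the first verification, i.e.\ confirming that the factorisation of $x$ read off from $f$ is legitimate for a filter under $R'$. The filter definition leaves the segmentation strategy somewhat implicit -- where several commit positions are available a filter may have latitude, and the monotonicity clause interacts subtly with that latitude -- so one must argue carefully that replacing the range $R$ by $R'$ neither removes nor introduces commitment points; this is exactly what $\operatorname{dom}(R')=\operatorname{dom}(R)$ delivers, and it is the crux of the argument. The finiteness of $\Gamma'$ and the transition-monoid computation are routine by comparison.
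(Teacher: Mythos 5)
Your proposal is correct and follows essentially the same route as the paper: the set $T$ of state transformations $\widehat{w}$ is exactly the paper's equivalence relation $x\sim y\iff\forall q~\delta(q,x)=\delta(q,y)$, your representatives $w_t$ are its class representatives $[y]_r$, and $R'$ and $f'$ are defined identically, with your transition-monoid computation being the unrolled form of the paper's induction on segments. If anything, you are more careful than the paper in checking that $f'$ is genuinely a filter under $R'$ (via $\operatorname{dom}(R')=\operatorname{dom}(R)$), a point the paper leaves implicit.
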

\begin{proof} We give a non-constructive proof.
Partition all strings in $\Gamma^*$ using the relation $\sim$ defined as : $x \sim y \iff \forall q \in Q\ \delta(q, x) = \delta(q, y)$

It is easy to see that $\sim$ is reflexive, symmetric and transitive.

There are only $\left|Q\right|^{\left|Q\right|}$ possibilities for $\delta(q, x)$ for a fixed $x$. Thus, the number of equivalence classes is upper-bound by $\left|Q\right|^{\left|Q\right|}$ and is finite.

Choose a class representative for each equivalence class arbitrarily (e.g. the lexicographically least string in that class). Let $[y]$ denote the equivalence class of $y$ and $[y]_r$ denote the class representative.

Define $R' = \set{(x, [y]_r) \mid (x, y) \in R}$

Define $f'(x) = [y_1]_r \cdot [y_2]_r \cdot \ldots  \cdot [y_n]_r$ where $f(x)$ has segments $x_1, \ldots  x_{n+1}$ and replacements $y_1, \ldots , y_n$.

We prove by induction on segments that $ \forall q \in Q \forall x \in \Gamma^* \delta(q, f(x)) = \delta(q, f'(x))$.

\textbf{Base Case:} 0 segments

$f(\varepsilon) = \varepsilon$ and $f'(\varepsilon) = \varepsilon$. $\therefore$ $\forall q \delta(q, f(x)) = \delta(q, f'(x)) = q$

\textbf{Induction Hypothesis:} $\forall q\ \delta(q, f(x)) = \delta(q, f'(x))$ where $x$ has $< k$ segments

\textbf{Induction Step:} Consider $x \in \Sigma^*$ with segments $x_1, \ldots , x_k$.

Let $x' = x_1 \cdot \ldots  \cdot x_{k-1}$

Then $\delta(q, f(x))' = \delta(q, f'(x'))$ by induction hypothesis. Let these both equal $q'$.

Let $f(x) = f(x')y$ and $f'(x) = f(x')y' = f(x')[y]_r$

Since $\forall q'\ [y]_r \sim y$, $\delta(q', [y]_r) = \delta(q', y)$

Thus by induction we have proved that $\delta(q, f(x)) = \delta(q, f'(x)$ for all $q \in Q$ where $x \in \Gamma^*$ with any number of segments, which is just any $x \in \Gamma^*$
\end{proof}

\begin{remark}[Sound alternate monitors]
\label{remark:sound-alternate-monitors}
We can also construct a primary-alternate pair $(\phi, \psi)_R$ which is sound and may be incomplete by using a construction similar to that in \cref{theorem:optimality} by determinizing it and updating $\delta^\psi(S, ~\gamma) \gets \tset{q_{err}}$ if $q_{err} \in \delta(S,~R^{-1}(\gamma))$. It can be argued in a similar fashion that this construction is optimally complete among all sound lossy monitors. \proofinappendix
\end{remark}

\section{Framework Instantiations}
\label{sec:more-examples}
We have already shown the applicability of our framework to loss types such as those in \cref{fig:formal-definition-count-and-silent}. In this section, we describe three more instantiations of the framework that illustrate the variety of realistic event loss models it can accommodate.

\paragraph{\bf Counting frequency of missed symbols up to n missed symbols}
\label{example:frequency-count}
This loss model was considered in \cite{Falzon13} for lossily compressing event traces over a slow network. It is a modification of the dropped-count filter in \cref{fig:dropped-count-filter} where additional information about symbols is kept. 

Let \(n \in Z^+\) and let the symbols in $\Sigma$ be indexed by $I = \set{1 \ldots \abs{\Sigma}}$ and be denoted by $\sigma_i$ where $i\in I$. Define $\Gamma$ as $\set{(c_1, c_2, \ldots, c_\abs{\Sigma}) | 0 < c_1 + \ldots + c_{\abs{\Sigma}} \leq n}$.
The loss model is defined as $R = \set{(x,(c_1, c_2, \ldots c_{\abs{\Sigma}})) \mid \bigwedge_{i \in I} c_i = \#\sigma_i(x)}$.

We discuss two key differences between our formalization and that of \cite{Falzon13}. First, the total size of the missed symbols is bounded in our case so that we have a finite alphabet with each transition taking $O(1)$ time in the determinized alternate DFA. \cite{Falzon13} uses a constraint automata which accepts an infinite alphabet and each transition takes $O(\abs{Q})$ time. We note that even if more than $n$ symbols are missed at a time, then up to $mn$ missed symbols can produced $m$ alternate symbols to transition to the correct set of states in our framework. The second difference is in consideration of soundness and completeness. While we construct a complete optimal monitor (without any false positives), they construct a sound monitor (without any false negatives). As \autoref{remark:sound-alternate-monitors} shows, this is not an issue since we can easily construct a sound monitor instead.

\paragraph{\bf Merged Objects}
\label{example:merged-objects-filter}

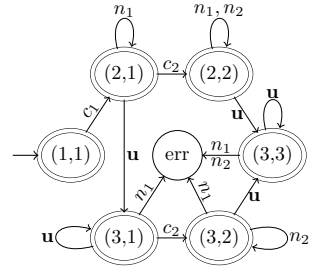
\begin{wrapfigure}[12]{r}{0.35\textwidth}
        \centering
        \vspace*{-12mm}
        \usetikzlibrary{positioning,automata,calc,shapes.geometric,arrows,fit} 
\begin{tikzpicture}[inner sep=1,node distance=0.4,scale=0.75,every node/.style={scale=0.75}] %

  \tikzset{every state/.style={shape=ellipse,very thin}}
  
  \node[state,accepting,initial]   (q_{11})                    {(1,1)}; 
  \node[state,accepting]           (q_{21})  [above right= 0.6 and 0.1 of q_{11}] {(2,1)}; %
  \node[state,accepting]           (q_{22})  [right=of q_{21}] {(2,2)};
  \node[state,accepting]           (q_{33})  [below right= 0.6 and 0.1 of q_{22}] {(3,3)};
  
  \node[state,accepting]           (q_{31})  [below = 1.5 of q_{21}] {(3,1)};
  \node[state,accepting]           (q_{32})  [right= of q_{31}] {(3,2)};    

  \node[state]                     (q_{err}) [left = 0.5 of q_{33}] {err};
  
  \draw[->,very thin] 
            (q_{11})     edge                 node [above,sloped] {$c_1$} (q_{21})
            (q_{21})     edge [loop above]    node         {$n_1$}    ()
            (q_{21})     edge                 node [above] {$c_2$} (q_{22})
            (q_{22})     edge [loop above]    node         {$n_1, n_2$}   ()
            (q_{22})     edge                 node [left] {\bf u} (q_{33})
            (q_{33})     edge [loop above]    node         {\bf u} ()
            (q_{33})     edge                 node [above,sloped] {$n_1 $} node [below,sloped] {$n_2$}   (q_{err})
            
            (q_{21})     edge                 node [right]  {\bf u}    (q_{31})
            (q_{31})     edge                 node [above]        {$c_2$} (q_{32})
            (q_{31})     edge [loop left]     node                {\bf u}    (q_{31})
            (q_{31})     edge                 node [above,sloped] {$n_1$} (q_{err})
            (q_{32})     edge [loop right]    node                {$n_2$} ()
            (q_{32})     edge                 node [above,sloped] {$n_1$} (q_{err})
            (q_{32})     edge                 node [right] {\bf u}    (q_{33});
\end{tikzpicture}
        \caption{A composite monitor for SafeIter on two iterators \cite{Purandare13}.  $(i, j)$ represents states $(q_i, q_j)$ for two different iterators and the subscript $1$ on events is assigned to the monitor created first.}
        \label{fig:safeiter_composite_monitor}
\end{wrapfigure}

Here we look at a new loss type which loses information about which object an event belongs to in a multi-object monitor.
Let $O = \set{o_1 \ldots o_m}$ be a set of objects with parametric events $E = e_1 \dots e_n$, i.e. $e_1(o_1)$ is a distinct event from $e_1(o_2)$. This means that $\Sigma = \set{e(o) \mid e \in E \land o \in O}$.

For $\sigma \in \Sigma, \gamma \in \Gamma$, let $\sigma R\gamma $ iff $\sigma = \gamma(o) \land o \in O$. In the filtered event stream, we lose information about which object the event belongs to within $O$.
For the general case we can build an optimal monitor using the construction in \cref{theorem:optimality}. We give an example of the  multi-object property ``SafeIter" shown in \cref{fig:safeiter-property} which states that a collection object should not be updated while an iterator object on that collection iterates. 
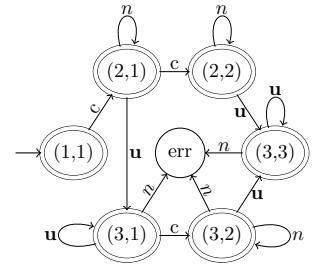
\begin{wrapfigure}[10]{r}{0.35\textwidth}
        \centering
        \vspace*{-13mm}
        \usetikzlibrary{positioning,automata,calc,shapes.geometric,arrows,fit}%
\begin{tikzpicture}[inner sep=1,node distance=0.4,scale=0.75,every node/.style={scale=0.75}] %
  \tikzset{every state/.style={shape=ellipse,very thin}}
  \node[state,accepting,initial]   (q_{11})                    {(1,1)}; 
  \node[state,accepting]           (q_{21})  [above right= 0.6 and 0.1 of q_{11}] {(2,1)}; %
  \node[state,accepting]           (q_{22})  [right=of q_{21}] {(2,2)};
  \node[state,accepting]           (q_{33})  [below right= 0.6 and 0.1 of q_{22}] {(3,3)};
  
  \node[state,accepting]           (q_{31})  [below = 1.5 of q_{21}] {(3,1)};
  \node[state,accepting]           (q_{32})  [right= of q_{31}] {(3,2)};    

  \node[state]                     (q_{err}) [left = 0.5 of q_{33}] {err};

  \draw[->,very thin] 
            (q_{11})     edge                 node [above,sloped] {c}     (q_{21})
            (q_{21})     edge [loop above]    node                {$n$}     ()
            (q_{21})     edge                 node [above]        {c}     (q_{22})
            (q_{22})     edge [loop above]    node                {$n$}     ()
            (q_{22})     edge                 node [left] {\bf u} (q_{33})
            (q_{33})     edge [loop above]    node                {\bf u} ()
            (q_{33})     edge                 node [above,sloped] {$n$}     (q_{err})
            
            (q_{21})     edge                 node [above,right]  {\bf u} (q_{31})
            (q_{31})     edge                 node [above]        {c}     (q_{32})
            (q_{31})     edge [loop left]     node                {\bf u} (q_{31})
            (q_{31})     edge                 node [above,sloped] {$n$}     (q_{err})
            (q_{32})     edge [loop right]    node                {$n$}     (q_{err})
            (q_{32})     edge                 node [above,sloped] {$n$}     (q_{err})
            (q_{32})     edge                 node [right] {\bf u} (q_{33});
\end{tikzpicture}
        \caption{Optimal complete alternate monitor.}
        \label{fig:safeiter_lossy}
\end{wrapfigure}

A composite monitor for the SafeIter property from \cite{Purandare13} is shown in \cref{fig:safeiter_composite_monitor} for two iterators $I_1$ and $I_2$. The loss model $R$  merges events for $I_1$ and $I_2$, and using \autoref{theorem:optimality} to construct the optimal alternate monitor, we obtain the monitor in \cref{fig:safeiter_lossy}. 
For example, the obtained monitor only misses the violation for the event streams like $c_1 u c_2 n_2\text{*} n_1$, i.e. when we actually need the information that event $n$ happened on object $1$, but can still report violations for event streams matching $c_1 n_1 c_2 (n_1 | n_2) uu\text{*}(n_1 | n_2)$ or $c_1 u c_2 (n_2 | n_1)\text{*} uu\text{*} (n_1 | n_2)$. 

\paragraph{\bf Missing Loop Events}
\begin{figure}[t]
    \begin{subfigure}[m]{0.22\textwidth}
        \input{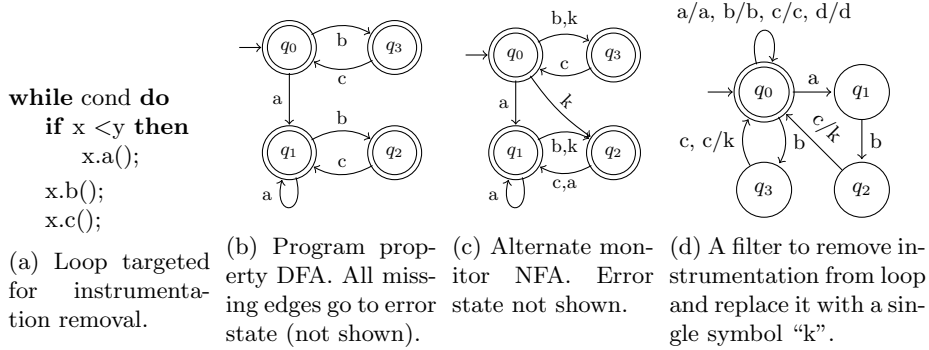}
        \caption{Loop targeted for instrumentation removal.}
    \end{subfigure}
    \hfill
    \begin{subfigure}[t]{0.23\textwidth}
        \usetikzlibrary{positioning,automata,calc,shapes.geometric,arrows,fit} 
\begin{tikzpicture}[shorten >=1pt,node distance=1.4,on grid, scale=0.8]
  \tikzset{every node/.style={scale=0.8}}
  \node[state,accepting,initial]    (q_0)                             {$q_0$}; 
  \node[state,accepting]            (q_1)    [below=of q_0]  {$q_1$}; 
  \node[state,accepting]            (q_2)    [right=of q_1]           {$q_2$}; 
  \node[state,accepting]            (q_3)    [right=of q_0]  {$q_3$}; 
  
  \path[->] (q_0)     edge                node [left]        {a}(q_1)

            (q_1)     edge [loop below]   node [above left= 0 and 4pt]        {a} (q_1)
            (q_1)     edge [bend left]    node [above]        {b} (q_2)
            (q_2)     edge [bend left]    node [above]        {c} (q_1)

            (q_0)     edge [bend left]    node [below]        {b} (q_3)
            (q_3)     edge  [bend left]   node [below]        {c} (q_0)
            ;
\end{tikzpicture}
        \caption{Program property DFA. All missing edges go to error state (not shown).}
    \end{subfigure}
    \hfill
    \begin{subfigure}[t]{0.22\textwidth}
        \usetikzlibrary{positioning,automata,calc,shapes.geometric,arrows,fit} 
\begin{tikzpicture}[shorten >=1pt,node distance=1.3cm,on grid, scale=0.8]
  \tikzset{every node/.style={scale=0.8}}
  \node[state,accepting,initial]    (q_0)                             {$q_0$}; 
  \node[state,accepting]            (q_1)    [below=of q_0]  {$q_1$}; 
  \node[state,accepting]            (q_2)    [right=of q_1]           {$q_2$}; 
  \node[state,accepting]            (q_3)    [right=of q_0]  {$q_3$}; 
    
  \path[->] (q_0)     edge                node [left] {a} (q_1)
            (q_0)     edge [bend right=10]    node [above,sloped]  {k} (q_2)

            (q_1)     edge [loop below]   node [above left= 0 and 4pt]  {a} (q_1)
            (q_1)     edge [bend left]    node [below]        {b,k} (q_2)
            (q_2)     edge [bend left]    node [below]        {c,a} (q_1)

            (q_0)     edge [bend left]    node [above]        {b,k} (q_3)
            (q_3)     edge  [bend left]   node [above]        {c} (q_0)
            ;
\end{tikzpicture}
        \caption{Alternate monitor NFA. Error state not shown.}
    \end{subfigure}
    \hfill
    \begin{subfigure}[t]{0.28\textwidth}
        \usetikzlibrary{positioning,automata,calc,shapes.geometric,arrows,fit} 
\begin{tikzpicture}[shorten >=1pt,node distance=1.3,on grid, scale=0.9]
  \tikzset{every node/.style={scale=0.9}}
  \node[state,accepting,initial]    (q_0)                          {$q_0$}; 
  \node[state]                      (q_1)     [right=of q_0]       {$q_1$}; 
  \node[state]                      (q_2)     [below=of q_1]       {$q_2$}; 
  \node[state]                      (q_3)     [below=of q_0]       {$q_3$}; 
  
  \path[->] (q_0)     edge [loop above]   node [above]      {a/a, b/b, c/c, d/d} (q_0)
            (q_0)     edge                node [above]      {a}    (q_1)
            (q_1)     edge                node [right]      {b}              (q_2)
            (q_2)     edge                node [above,sloped]      {c/k}            (q_0)

            (q_0)     edge [bend left]    node [right]      {b}              (q_3)
            (q_3)     edge [bend left]    node [left]       {c, c/k}         (q_0)
            ;
\end{tikzpicture}
        \caption{A filter to remove instrumentation from loop and replace it with a single symbol ``k".}
    \end{subfigure}
    \caption{Missing Events in Loops to be able to remove instrumentation in them. } 
    \label{fig:stuttering}
\end{figure}

Significant number of events can be generated within loops in a program. \cite{Purandare10} addresses this by eliminating instrumentation losslessly within loops when monitoring the first few iterations is sufficient.%

We consider an extension of this idea in \cref{fig:stuttering} where the program structure is used to obtain the loss model. Instrumentation from the loop is 
replaced with a single symbol ``k" at the end of the loop. If instrumentation is disabled for all iterations of the loop, the monitor is in states $\set{q_2, q_3}$ after the event ``k". If the first few iterations are monitored and event ``a" is generated, the monitor will be in states $\set{q_2}$ after the event ``k".

The loss model can be calculated using a method from \cite{Dwyer07}. It presents a static analysis which finds the set of states that are possible after a program region, e.g., a loop body, for any given starting state if monitoring were to be disabled in that region. We can use this information directly instead of $R^{-1}(k)$ in \autoref{theorem:optimality} for computing $\delta'(q,~k)~\forall q$. This is equivalent to mapping the set of strings which go from $q$ to $\delta'(q,~k)$ to the new symbol for the loss model.

\section{Empirical Study}
\label{sec:evaluation}

\newcolumntype{Y}{>{\centering\arraybackslash}X} %
\begin{table}[t]
\caption{\protect\centering Error detection for short strings (Length 5 to 10). For strings of lengths 10 - 20, data is available in the appendix.}
\vspace{2mm}
\label{table:raw-data}
\scriptsize
\begin{tabularx}{\textwidth}{|l|Y|Y|Y|Y|}
\hline
\textbf{Property}                        & $\rho: 0.1, \eta: 3$     & $\rho: 0.1, \eta: 6$     & $\rho: 0.3, \eta: 3$ &  $\rho: 0.3, \eta: 6$    \\
\hline
ArrayDeque\_UnsafeIterator               & 75\% (4598) & 71\% (4595) & 39\% (4579) & 33\% (4606) \\
Collections\_SynchronizedCollection      & 75\% (4345) & 69\% (4296) & 37\% (4298) & 30\% (4379) \\
Collections\_SynchronizedMap             & 60\% (1824) & 56\% (1798) & 19\% (1848) & 16\% (1814) \\
Collection\_UnsynchronizedAddAll         & 90\% (4942) & 84\% (4935) & 65\% (4942) & 52\% (4931) \\
Console\_CloseReader                     & 85\% (4739) & 81\% (4761) & 58\% (4764) & 46\% (4801) \\
HttpURLConnection\_SetBeforeConnect      & 86\% (4758) & 81\% (4759) & 57\% (4783) & 46\% (4759) \\
InputStream\_MarkAfterClose              & 86\% (4753) & 80\% (4741) & 58\% (4777) & 46\% (4742) \\
Iterator\_RemoveOnce                     & 87\% (4310) & 83\% (4288) & 63\% (4317) & 57\% (4335) \\
ListIterator\_RemoveOnce                 & 83\% (3118) & 80\% (3139) & 57\% (3092) & 53\% (3085) \\
ListIterator\_Set                        & 83\% (3947) & 79\% (4005) & 50\% (3963) & 43\% (3945) \\
List\_UnsynchronizedSubList              & 74\% (4564) & 71\% (4569) & 37\% (4548) & 31\% (4582) \\
Map\_UnsafeIterator                      & 60\% (1870) & 57\% (1872) & 18\% (1803) & 15\% (1844) \\
Math\_ContendedRandom                    & 94\% (4961) & 91\% (4970) & 82\% (4963) & 71\% (4967) \\
NavigableSet\_Modification               & 60\% (1868) & 57\% (1876) & 19\% (1921) & 15\% (1897) \\
PushbackInputStream\_UnreadAheadLimit    & 86\% (4268) & 80\% (4299) & 61\% (4299) & 48\% (4265) \\
Reader\_ReadAheadLimit                   & 87\% (4150) & 80\% (4144) & 62\% (4151) & 52\% (4121) \\
Reader\_UnmarkedReset                    & 89\% (2434) & 90\% (2442) & 69\% (2466) & 71\% (2453) \\
Scanner\_ManipulateAfterClose            & 75\% (4567) & 71\% (4600) & 38\% (4549) & 31\% (4578) \\
ServerSocket\_SetTimeoutBeforeBlocking   & 90\% (4940) & 85\% (4941) & 66\% (4944) & 55\% (4952) \\
ServiceLoader\_MultipleConcurrentThreads & 85\% (4940) & 81\% (4961) & 54\% (4937) & 50\% (4934) \\
Socket\_CloseInput                       & 75\% (4602) & 70\% (4568) & 38\% (4586) & 32\% (4583) \\
Socket\_InputStreamUnavailable           & 90\% (4904) & 85\% (4912) & 68\% (4911) & 57\% (4912) \\
Socket\_LargeReceiveBuffer               & 80\% (4754) & 76\% (4795) & 48\% (4768) & 40\% (4757) \\
Socket\_ReuseAddress                     & 80\% (4732) & 76\% (4765) & 48\% (4746) & 41\% (4730) \\
Thread\_SetDaemonBeforeStart             & 95\% (4939) & 90\% (4939) & 78\% (4941) & 69\% (4950) \\
Throwable\_InitCauseOnce                 & 78\% (3971) & 73\% (3937) & 42\% (3925) & 34\% (3948) \\
\hline
\end{tabularx}
\end{table}

We implemented the optimal monitor construction algorithm and dropped-count loss type \cref{fig:dropped-count-filter} to qualitatively analyze the behavior of optimal monitors under losses. The main goal of the study is to see if the optimal monitor is \emph{effective} in detecting violations over lossy event streams. 
While the primary contribution of this work is theoretical -- the optimality of our construction has been proven -- it is still informative
to assess the potential for practical impact.
As with similarly oriented work, e.g., \cite{Gange17ATVA}, we use a simulation study for
this purpose and leave the engineering of efficient tooling to future work.

We address different aspects of effectiveness by exploring the following research questions.

\textbf{RQ1:} How many violations did the optimal alternate monitor miss? 

A trivial monitor can miss all violations and still be ``complete", since it produces no false positives. A lossless monitor misses no violations. Our optimal alternate monitor lies somewhere between the two, and we wish to measure where.

\begin{figure}[t]
\begin{subfigure}[5]{0.52\textwidth}
	\hspace{-8pt}\includegraphics[scale=0.45]{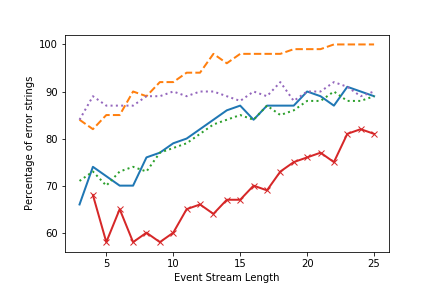}
	\caption{String length vs percentage of strings a violation was observed on. These 5 properties show trends which are representative of all properties. %
	}
	\label{fig:chart}
\end{subfigure}
\hfill
\begin{subfigure}[r]{0.44\textwidth}
    \vspace{2mm}
    \hspace{-21pt}\includegraphics[scale=0.55]{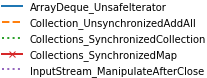}%
    \vspace{-5mm}
    \begin{table}[H]
\centering
\begin{tabular}{|ccc|}
\hline
$\rho$ & $\eta$ & \%     \\
\hline
0.1  & 3      & 84.9\% \\
0.1  & 6      & 77.1\% \\
0.3  & 3      & 65.7\% \\
0.3  & 6      & 54.2\% \\
\hline
\end{tabular}
\end{table}

    \vspace{-5mm}
    \caption{Average percentage of violating strings detected across settings of $\rho$ and $\eta$.}
	\label{fig:rho-eta-table}
	\end{subfigure}
\caption{Summarized results. A complete version of (a) showing all 26 properties is available in the appendix.}
\label{fig:summarized-results}
\end{figure}
\textbf{RQ2:} How many events did the optimal alternate monitor have to process?

The main motivation behind induced event loss is overhead control. We explore the relation between the events missed and impact on alternate monitor's ability to report an error.

\subsection{Methodology}

We selected a number of properties from \cite{bricsautomaton}. We mined 157 property specification files from \texttt{runtimeverification.com}, and collected a subset of 26 properties after de-duplication for which the specification contains a regular expression describing the property. We also filtered out properties which are trivial for dropped count loss model (e.g. properties that require an event occurs at most once).

Many of these properties specify one or more \emph{creation} events that are used to instantiate new monitors according to the event's context~\cite{javamop}. Skipping these events makes monitoring impossible for subsequent events related to these monitors. 

We took special care to ensure that such creation events were injected into the event stream appropriately in our study.

Our implementation reads in the property specification files and extracts 1) all events, 2) creation events, 3) the regular expression describing the property, and 4) \texttt{@match} or \texttt{@fail} keywords which specify if a violation occurs when regex is matched or when it fails to match, respectively. It then creates minimized DFAs with an error state from these regexes with the help of \texttt{brics.automaton} library \cite{bricsautomaton}. 

After this pre-processing, our implementation reads in the description of property DFA $\phi$ and computes a bounded drop loss model (using n = 5 as the bound) on its alphabet $\Sigma$. 
This loss model and the property DFA are then used to create an optimal monitor NFA, which is  determinized and minimized to give an optimal alternate monitor $\psi$.

To explore variation in monitor performance with trace length, we generated traces of length $n$ as follows:
\begin{enumerate}
    \item If the property had a non-empty set $C$ of creation events, the first event was chosen randomly from $C$, and rest $n - 1$ events were chosen from $\Sigma \setminus C$ uniformly at random;
    \item Otherwise, all $n$ events were chosen from $\Sigma$ uniformly at random.
\end{enumerate}
The generated traces were defined for each property using its alphabet, $\Sigma$; we did not reuse traces across properties even if they share the same alphabet.

Each trace was then subjected to the following procedure to inject artificial loss where count symbols are restricted to the range $1, \ldots, 5$.
The procedure takes two parameters $\rho$ (probability of disabling monitoring) and $\eta$ (mean length of number of events to miss):
\begin{enumerate}
    \item Start at the first element of the trace, if there is a creation event, replicate it in the alternate stream and consider the next event. The following steps are repeated until there are no more events to be processed.
    \item Draw a random number $disable \sim bernoulli(\rho)$ which has probability $\rho$ of being 1 and $1 - \rho$ of being 0. 
    \item If $disable = 1$, draw a random number $l \sim exponential(\eta)$, which is a real number with expected value $\eta$. Ignore the next $m = \ceil*{l}$ symbols in the input stream. If $m$ is not divisible by 5, add a symbol for $m$ modulo $5$ and then $\floor*{(m/5)}$ ``$5$" symbols to the alternate stream.
    \item If $disable = 0$, add the current symbol in the original stream to the alternate stream.
\end{enumerate}

We then simulated the original property monitor $\phi$ and alternate monitor $\psi$ on the original and alternate event streams, respectively.
Simulations ran over $M = 1000$ random traces each for lengths between $n = 3$ to $n = 25$, for 4 combinations of $\rho \in \set{0.1, 0.3}$ (low and high probability of disabling monitoring) and $\eta \in \set{3, 6}$ (low and high disable lengths).
Our simulation recorded whether the monitor exclusively reached the error state in which case it reported a violation.

\subsection{Results}

We summarized the results in \cref{table:raw-data} and \cref{fig:summarized-results}. Here we address our RQs.

\textbf{RQ1:} We see that the optimal monitor reports anywhere from 60\% to 94\% violations for low number of losses ($\rho, \eta = 0.1, 0.3$) to 15-69\% for higher number of losses ($\rho, \eta = 0.3, 6$). There is large variation among the properties, and some are more amenable for reporting losses than others. However, these results indicate that the optimal alternate monitors are capable of detecting errors in lossy event streams. As expected and proved earlier, these monitors did not report a single false positive preserving the completeness of the analysis. An interesting unanswered question left to the future work is to see what structural characteristics of these properties cause the variation in monitorability under losses with respect to different loss types.

\textbf{RQ2:} As we see in \cref{fig:rho-eta-table}, average number of events processed are $84.9\%$ for the low-loss case and $54.2\%$ for the high-loss case. There is a clear trend in \cref{table:raw-data} of the violation percentage decreasing with an increase in incurred losses. Still, it is promising to see that despite so many losses, 17 out of 26 properties are able to report 40\% violations or more.

\subsection{Limitations and Threats to Validity}

An inherent limitation of this study is that it is based on simulation and not on real program traces. All symbols are generated with equal probability for our artificial traces. The traces generated by real programs are likely to be biased towards non-violating behavior for most objects. However, the primary goal of our study is to understand the error detection capability of an alternate monitor. We believe that our randomly generated traces exclusively model the erroneous behavior of the violating parts of the program. Therefore, the results indicating the error detection ability of an alternate monitor on such traces reflect its ability to report errors in real violating behaviors. 

Another limitation is the number of events considered in a trace. Even though real programs generate long traces, they often consist of a large number of short \textit{sub-traces} where each one of which belongs to a different monitor. A sub-trace that belongs to one monitor does not interfere with the analysis of another sub-trace. Therefore, we believe that our choice of generating short but monitor-specific traces is justified.
Moreover, as the number of events increases (refer to Figure~\cref{fig:chart}), the ability of optimal alternate monitor to report a violation tends to be higher. This indicates that shorter traces are more challenging for alternate monitors than longer ones.

\section{Approximate Alternate Monitors}
\label{sec:approximate-alternate-monitors}

We've already discussed the structure of an \emph{optimal} alternate monitor. We now move the discussion to non-optimal alternate monitors. These may be desirable due to variety of reasons -- smaller number of states, or a better tradeoff between violations reported and overhead incurred.

For a primary-alternate optimal pair $(\phi, \psi^*)_R$, the number of states in $\psi^*$ may be exponential in $\abs{Q}$ after determinization (up to $2^{|Q^\psi|-1}$, see \cref{remark:num-states} below). In our own empirical evaluation in the previous section, all properties had 5 or fewer states in their minimized DFA form. While we observed the size of most properties being considered in recent literature to be small  (8 states or less \cite{Pradel10}), the properties for monitoring are allowed to be specified by the user and hence, can have arbitrary size. Moreover, several properties specified by the user may be combined into a single property to be monitored \cite{Purandare13} which can have a large size.

For properties with a large number of states, it is desirable to have alternate monitors of size polynomial in $|Q|$. The problem is related to finding closest over-approximation of a regular language within $n$ states, which is conjectured to be hard \cite{Gange17ATVA}. There is already a line of work \cite{Gange17ATVA,Luchaup14INFOCOM} on over-approximation of DFAs and NFAs which we've detailed in our related work section. While we do not present or evaluate any algorithms, in this section we consider an important property of alternate monitors that can aid development of heuristics for the construction of such approximate monitors.

\begin{remark}[Number of states in optimal determinized alternate monitor]
\label{remark:num-states}
Consider a primary-alternate optimal pair $(\phi, \psi^*)_R$. $\psi^*$ is a NFA with $\abs{Q}$ states. A NFA with $n$ states may have upto $2^n$ states after determinization. But as \cref{lemma:err-partition-refinement} below states, $S \subseteq Q^\psi \setminus q_{err}$ and $S$ are mergable, so determinization of $\psi^*$ may have only upto $2^{n-1}$, i.e. $2^{\abs{Q^\psi} - 1}$ states.
\end{remark}

\begin{definition}[Partition refinements]
If $\mathcal{P}$ and $\mathcal{Q}$ are partitions of a set $S$, $\mathcal{Q}$ is called a \emph{coarsening} of $\mathcal{P}$ and $\mathcal{P}$ is called a \emph{refinement} of $\mathcal{Q}$ iff $ \forall P \in \mathcal{P} ~\exists Q \in \mathcal{Q} \sst P \subseteq Q$.
\end{definition}

\begin{lemma}
\label{lemma:err-partition-refinement}
$\mathcal{P}^{err}(\psi)$ is defined as a partition on $2^{Q^\psi}$ such that its classes contain exactly two elements -- $S \subseteq Q \setminus q_{err}$ and $S \cup \tset{q_{err}}$, i.e. $\mathcal{P}^{err}(\psi) = \tset{\tset{S, S \cup \tset{q_{err}}} \mid S \subseteq Q^\psi \setminus q_{err}}$. 
$\mathcal{P}^{err}(\psi)$ is a refinement of the partitioning in DFA minimization of a determinized NFA, i.e. $S$ and $S \cup \set{q_{err}}$ are merged together into a single state in the minimum-state determinization of a NFA. \proofinappendix
\end{lemma}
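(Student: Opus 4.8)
The plan is to show that, in the determinization of $\psi_N$, the two subset-states $S$ and $S \cup \tset{q_{err}}$ accept the same language for every $S \subseteq Q^\psi \setminus q_{err}$; since DFA minimization collapses exactly the language-equivalent states, and the classes of $\mathcal{P}^{err}(\psi)$ are precisely the pairs $\tset{S, S \cup \tset{q_{err}}}$, each such pair lands inside one class of the minimization partition $\mathcal{P}$, which is the definition of $\mathcal{P}^{err}(\psi)$ refining $\mathcal{P}$. A preliminary bookkeeping step is to observe that $\mathcal{P}^{err}(\psi)$ really is a partition of $2^{Q^\psi}$: every subset of $Q^\psi$ either contains $q_{err}$ or does not, so the indicated two-element sets are pairwise disjoint and cover $2^{Q^\psi}$.

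The core is a transition-closure claim. Writing $\delta'$ for the lifted (subset-construction) transition function, I would prove by induction on $\abs{w}$ that for each $w \in \Gamma^*$ there is a set $T_w$ with $\delta'(S, w) = T_w$ and $\delta'(S \cup \tset{q_{err}}, w) = T_w \cup \tset{q_{err}}$. The base case $w = \epsilon$ gives $T_\epsilon = S$. For the inductive step one uses two facts: $\delta'(A \cup B, \gamma) = \delta'(A, \gamma) \cup \delta'(B, \gamma)$ for the subset construction, and $\delta^\psi(q_{err}, \gamma) = \tset{q_{err}}$ because $q_{err}$ is a trap state of the NFA property $\psi_N$. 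Hence $\delta'(X \cup \tset{q_{err}}, \gamma) = \delta'(X, \gamma) \cup \tset{q_{err}}$ for any $X$, so the relation ``differs only by the presence of $q_{err}$'' is closed under reading symbols.

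Next I would compare acceptance. In the determinized DFA the accepting subsets are exactly those meeting $F = Q \setminus q_{err}$, i.e. those containing some state other than $q_{err}$. Since $(T_w \cup \tset{q_{err}}) \setminus \tset{q_{err}} = T_w \setminus \tset{q_{err}}$, the sets $T_w$ and $T_w \cup \tset{q_{err}}$ contain a non-error state simultaneously, so they have identical acceptance status for every $w$. Thus $S$ and $S \cup \tset{q_{err}}$ are Myhill--Nerode indistinguishable, and by correctness of DFA minimization they lie in the same class of $\mathcal{P}$; this holds for all $S$, reachable or not, matching the way $\mathcal{P}$ is stated as a partition of all of $2^{Q^\psi}$ in \cref{lemma:well-defined-NFA-labels}.

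The argument is short, so the only real friction — the \emph{main obstacle}, such as it is — is getting the boundary case right: when $S = \varnothing$ the pair is $\tset{\varnothing, \tset{q_{err}}}$, and one must check the reasoning does not silently assume $S \neq \varnothing$. Here $T_w = \varnothing$ for all $w$, both states are non-accepting dead states, so they are again equivalent; I would simply flag this explicitly rather than hide it inside the ``contains a non-error state'' phrasing. No other case analysis is needed.
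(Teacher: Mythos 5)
Your proof is correct and follows essentially the same route as the paper's: both arguments rest on the facts that $q_{err}$ is a trap state, that the subset-construction transition distributes over union, and that a subset-state is accepting iff it contains some non-error state, so $S$ and $S\cup\set{q_{err}}$ are language-equivalent and hence merged by minimization. Your version is somewhat more careful than the paper's two-line sketch, in particular in making the inductive transition-closure claim explicit and in handling the $S=\varnothing$ (equivalently, empty-subset-state) boundary case, which the paper's assertion that $\set{q_{err}}$ is the \emph{only} non-accepting subset quietly glosses over.
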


Our strategy for constructing these approximate monitors is to omit some states in the determinized output. In order to eliminate these states, we need to answer the question of what to do with the incoming transitions to these states. It turns out that we can redirect the transitions to certain other states without losing completeness. We prove this in the following lemma.

\begin{figure}[t]
\setcounter{subfigure}{0}
\centering
\begin{subfigure}{0.48\textwidth}
    \begin{tikzpicture}[node distance=1.2,scale=0.8,on grid] %
  \tikzset{every node/.style={scale=0.8}}

  \node[state,accepting,initial]    (q_0)                           {$q_0$}; 
  \node[state,accepting]            (q_1)     [above right=0.6 and 1.5 of q_0]  {$q_1$}; 
  \node[state]                      (q_{err}) [right= 3 of q_0]  {$q_{err}$}; 
  \node[state,accepting]            (q_2)     [below= 1.2 of q_1]   {$q_2$}; 
  
  \path[->] (q_0)     edge                node [above,sloped] {a} (q_1)
            (q_0)     edge  [bend left]   node [below,sloped] {a,b} (q_2)
            (q_0)     edge  [loop above]  node [above] {c} ()
            
            (q_1)     edge  [loop right]  node [right] {c} ()
            (q_1)     edge                    node [below,sloped] {b}   (q_{err})
            (q_1)     edge  [bend left]   node [right] {c}   (q_2)
            
            (q_2)     edge  [bend left]   node [above,sloped]  {a,c}   (q_1)
            (q_2)     edge  [bend left]   node [below,sloped] {a}   (q_0)
            (q_2)     edge                node [below]  {b}   (q_{err});
\end{tikzpicture}
    \caption{Artifical NFA property}
\end{subfigure}
\begin{subfigure}{0.48\textwidth}
    \begin{tikzpicture}[node distance=1.2,scale=0.8,on grid] %
  \tikzset{every node/.style={scale=0.7}}

  \tikzset{every state/.style={shape=ellipse,very thin}}
  \node[state,accepting,initial]    (0)                         {\tset{0,err}}; 
  \node[state,accepting]            (2)   [above right = of 0]  {\tset{2,err}}; 
  \node[state,accepting]            (01)  [right = 2.3 of 2]        {\tset{0,1,err}}; 
  \node[state,accepting]            (12)  [below right = of 0]    {\tset{1,2,err}}; 
  \node[state,accepting]            (Q) [right = 2.3 of 12]        {Q}; 
  \node[state]                      (err) [right=1.7 of 0]       {\tset{err}}; 

  \path[->] (0)     edge  [bend left]   node [right] {b} (2)
            (0)     edge  [loop above]  node [above] {c} ()
            (0)     edge                node [above] {a} (12)
            
            (2)     edge                node [above,sloped] {b} (err)
            (2)     edge  [bend left=5]   node [above] {c} (01)
            (2)     edge  [bend left]   node [right] {a} (0)
            
            (12)    edge                node [above] {b} (err)
            (12)    edge  [loop above]  node [above] {c} ()
            (12)    edge                node [above] {a} (Q)
            
            (01)    edge  [bend left=15]   node [above] {b} (2)
            (01)    edge                node [right] {c} (Q)
            (01)    edge  [bend left]  node [above] {a} (12)
            
            (Q)     edge  [bend right=20]   node [above] {b} (2)
            (Q)     edge  [loop right]  node [right] {a,c} ();
\end{tikzpicture}
    \caption{Approximate DFA property}
\end{subfigure}
\caption{Approximate alternate monitors (For brevity, ${q_i\ldots q_j}$ is written as ${i\ldots
s j}$)}
\label{fig:approximation-example}
\end{figure}
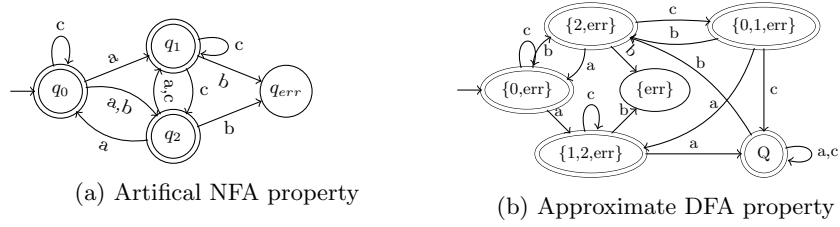

\begin{lemma}
\label{lemma:transition-replacement}
For a primary-alternate pair $(\phi, \psi)_R$ where $\psi$ is a superposed monitor's property, if we update $\delta_\psi(S, \gamma) \gets S'$ where $S' \supseteq \delta_\psi(S, \gamma)$ to obtain $\psi'$, then $(\phi, \psi')_R$ is a superposed primary-alternate pair.
\end{lemma}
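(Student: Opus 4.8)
The claim to prove is that $\psi'$ still over-approximates the primary monitor --- that it satisfies the superposed monitor condition --- and that it remains a bona fide superposed monitor's property. The plan is to isolate the one fact doing the work: enlarging a single transition of a determinized automaton can only enlarge its reachable sets. Write $\delta_\psi$ and $\delta_{\psi'}$ for the transition functions of (the determinizations underlying) $\psi$ and $\psi'$, whose states are labelled by subsets of $Q$; they agree everywhere except at the modified pair, where $\delta_{\psi'}(S, \gamma) = S' \supseteq \delta_\psi(S, \gamma)$. I would first prove the \emph{enlargement claim}: for every $y \in \Gamma^*$, $\delta_{\psi'}(q_0, y) \supseteq \delta_\psi(q_0, y)$.

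For this I would run $\psi$ and $\psi'$ in parallel on the prefixes of $y = y_1 \cdots y_n$, getting state sequences $\{q_0\} = S_0, \ldots, S_n$ and $\{q_0\} = S_0', \ldots, S_n'$ with $S_i = \delta_\psi(S_{i-1}, y_i)$ and $S_i' = \delta_{\psi'}(S_{i-1}', y_i)$, and maintain the invariant $S_i' \supseteq S_i$ by induction on $i$. The base case is immediate. For the step I would use that the \emph{original} transition $\delta_\psi$ is monotone in its set argument, being a union of per-state transitions. If $(S_{i-1}', y_i)$ is not the modified pair $(S, \gamma)$, then $\delta_{\psi'}(S_{i-1}', y_i) = \delta_\psi(S_{i-1}', y_i) \supseteq \delta_\psi(S_{i-1}, y_i) = S_i$ by the inductive hypothesis and monotonicity. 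If $(S_{i-1}', y_i) = (S, \gamma)$, then $S_{i-1} \subseteq S_{i-1}' = S$, so $S_i = \delta_\psi(S_{i-1}, \gamma) \subseteq \delta_\psi(S, \gamma) \subseteq S' = S_i'$, again by monotonicity. Instantiating the claim at $y = f(x)$ for an arbitrary filter $f \in \FR{R}$ and an $x$ ending with a segment, and combining with the superposed monitor condition for $\psi$, yields $\delta^\phi(q_0^\phi, x) \in \delta_\psi(q_0, f(x)) \subseteq \delta_{\psi'}(q_0, f(x))$, i.e.\ the superposed monitor condition for $\psi'$.

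It then remains to check the formal packaging: $\psi'$ is a finite-state monitor over $\Gamma$ (only one transition changed), its states are subsets of $Q$, and --- so long as the modified transition is not one leaving the error state, which does not arise when the lemma is used to redirect edges into a deleted non-error state --- $q_{err}$'s state stays a trap, so re-minimizing via the labelling of \cref{lemma:well-defined-NFA-labels} and \cref{remark:well-defined-powerset-minimized-labels} again yields a superposed monitor's property; hence $(\phi, \psi')_R$ is a superposed primary-alternate pair. I expect this bookkeeping --- reconciling the ad hoc transition surgery with the formal definition of a superposed monitor (well-defined subset labels, $q_{err}$ a trap, and, read strictly, realizability as a minimized determinized NFA over $Q$) --- to be the main obstacle; the over-approximation inequality itself is routine, and for the intended use one could instead state the conclusion directly in terms of the superposed monitor condition and invoke \cref{theorem:all-superposed-complete}.
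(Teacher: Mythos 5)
Your proof is correct and follows essentially the same route as the paper's: an induction on the length of the lossy stream showing that enlarging the one transition can only enlarge reachable sets, combined with the superposed condition already holding for $\psi$. The only difference is organizational --- you prove the full set-containment invariant $\delta_{\psi'}(q_0,y) \supseteq \delta_\psi(q_0,y)$ between the two determinized runs (and are, if anything, more careful than the paper about the surgery being on determinized-state transitions $(S,\gamma)$ rather than per-state ones), whereas the paper directly tracks membership of the primary monitor's state $\delta^\phi(q_0,x)$ through $\psi'$.
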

\begin{proof}
First, $Q_{\psi'} = Q_{\psi}$ and thus the states of $\psi'$ can be labelled by subsets of $Q_{\phi}$. We have to only show that the superposed monitor condition holds. We induct on length of $f(x)$.

\textbf{Base Case:} $\abs{f(x)} = 0$. $M_{\psi'}$ is in $q_{0}$ and $M_\phi$ is in $q_0$.\\
\textbf{Induction Hypothesis:} For $\abs{f(x)} = n$, $\delta^\phi(q_0, x) \in \delta^{\psi'}(\set{q_0}, f(x))$. \\
\textbf{Induction Step:} Consider a observed lossy stream of length $n + 1$: $f(x \cdot a) =  y \cdot \gamma, \gamma \in R(a), \abs{y} = n$. 
Let $q = \delta^\phi(q_0, x)$, and $q' = \delta^\phi(q_0, x \cdot a)$. We have to show that $q' \in \delta^{\psi'}(q_0, y \cdot \gamma)$.
\begin{align*}
          & q'\in \delta^{\psi}(q, \gamma) \tag{$\because M^\psi$ is superposed}\\
          & \delta^{\psi}(q, \gamma) \subseteq \delta^{\psi'}(q, \gamma) \tag{by construction} \\
 \implies & q'\in \delta^{\psi'}(q, \gamma) \tag{1} \\
          & q \in \delta^{\psi'}(q_0, y) \tag{2: from IH}  \\ 
 \implies & q' \in \delta^{\psi'}(\delta^{\psi'}(q_0, y), \gamma) \tag{from 1 and 2} \\
 \implies & q' \in \delta^{\psi'}(q_0, y \cdot \gamma) \tag*{$\qed$}
\end{align*}%
\end{proof}

Using the \autoref{lemma:transition-replacement}, we can discard a state $S$ and redirect all its incoming transitions to another state $S' \in 2^Q,\ S' \supseteq S$, while still retaining completeness.

We can now choose $n$ states to keep in $2^Q$ and discard others to construct an alternate monitor with a DFA property that has $n$ states. 

We show an example in \cref{fig:approximation-example} of a NFA's approximate determinization. This approximate monitor is on 6 states, whereas the complete minimum-state determinization of \cref{fig:approximation-example} has 8 states and is able to report violations on more strings ($e.g.~``bcb"$). Still, the approximate monitor does not lose the error reporting ability and neither does it lose completeness.

\section{Related Work}
\label{sec:related}
Runtime monitoring has been an active research area over the past few decades. A significant part of the research in this area has focused on optimizing monitors and controlling the runtime overhead to make monitoring employable in practice. %
Here, we discuss work which is closely related to our approach.

A line of research \cite{Bodden08,Bodden10a,Dwyer07,Dwyer07:AdaptiveOnlinePA} focuses on lossless partial evaluation of the finite state property to build residual monitors which process fewer events during runtime. \cite{Dwyer07} and \cite{Dwyer07:AdaptiveOnlinePA} can be modelled in our framework using loss models where $R^{-1}(y)$ is a singleton set. 

Another line of research \cite{Purandare13,Allabadi18} does not focus directly on reducing the number of events to be processed but proposes purely dynamic optimizations where resources at run-time are constrained. Allabadi et al. \cite{Allabadi18} constrain the number of monitors in a way which retains completeness but loses soundness, and Purandare et al. \cite{Purandare13} combine multiple monitor which share events into a single monitor to reduce the number of monitors updated.

Kauffman et al. \cite{Kauffman19RV} and Joshi et al. \cite{Joshi17} consider monitorability of LTL formulas under losses. \cite{Kauffman19RV} considers natural losses such as loss, corruption, repetition, or out-of-order arrival of an event and gives an algorithm to find monitorability of a LTL formula. They do not construct a monitor, which monitors lossy traces. \cite{Joshi17} considers monitorability of LTL formulas in the presence of one loss type which is equivalent to our dropped-count filter in \cref{fig:dropped-count-filter} with $n = 1$. They only handle the formulas whose synthesized monitor has transitions, which always lead to just one state, and their construction is only able to recover from losses when it observes such a transition. In general, recurrence temporal properties \cite{Manna90PODC} that can be modeled by B\"{u}chi automata are naturally immune to event losses due to loops in their structures. Our work primarily focuses on safety properties.   %

Falzon et al. \cite{Falzon13} consider the construction of an alternate sound monitor when %
for some parts of the traces only aggregate information,
such as the frequency of events but not their order, is available. We formalize this loss type in our framework in \cref{example:frequency-count}. 

Dwyer et al. \cite{Dwyer08} consider sub-properties formed when the alphabet is restricted to its subset to sample sub-properties from a given property. Their construction ensures completeness and is equivalent to our construction with $R = \{(x, y) \mid \allowbreak x \allowbreak \in  \allowbreak \Delta^*y~\forall y \in \Sigma \setminus \Delta\}$, where $\Delta$ is the set of symbols not observed as events. \cref{fig:silent-drop-monitor} generalizes it with  $R = \set{(x, y) \mid x \in \Delta^*y\ \forall y \in \Sigma}$

Basin et al. \cite{Basin13:IncompleteDisagreeingLogs} introduce a 3-valued timed logic to account for missing information in recorded traces for offline analysis. This allows them to report 3 results: if a violation occurred, if it did not occur, or if the knowledge is insufficient to report either. In the problem we consider, instead of having a single representation for missing information we can have multiple representations for different losses which can differ in their power to report an error.

Bartocci et al. \cite{Bartocci13:ARV} introduce statistical methods to inform overhead control and minimize the probability of missing a violation. For the monitors which are disabled, \cite{Stoller12:RVStateEstimation} introduces statistical methods to predict the missing information due to sampling, which is then used in \cite{Bartocci13:ARV} to get a probability that the violation occurred in an incomplete run. Instead of disabling monitoring altogether and predicting missing information, our approach records lossy information about the events to report violations while maintaining completeness.

Recent work considers over-approximation of DFAs and NFAs in the context of network packet inspection \cite{Ceska18TACAS,Luchaup14INFOCOM,Rubin06CCS} and in the general setting \cite{Gange17ATVA}. \cite{Luchaup14INFOCOM} considers keeping only a subset of frequently-visited states and merging the other states into the final state. \cite{Gange17ATVA} formulates the problem of finding an over-approximating DFA as a search problem and provide heuristics to solve it. %
A part of their algorithm involves a NFA to DFA conversion mechanism which merges state $q_1$ into $q_2$ if $L_{q_1} \subseteq L_{q_2}$, which is similar to the operation we perform in \autoref{lemma:transition-replacement}. \cite{Ceska18TACAS} consider over-approximating NFAs by adding self-loops to selected states, which is equivalent to merging them into $q_{acc}$ in the expanded DFA. This leads to any transition going through those paths to be unmonitorable, whereas we merge the states with one of the selected (possibly monitorable) superstates.

\section{Conclusion and Future Work}
\label{sec:conclusion}
In this work, we presented an efficient approach to support finite state monitoring of lossy event streams, where the losses could be natural or artificially induced. Our approach maintains completeness and is optimally sound. In addition to making monitoring feasible under these conditions, the approach should help improve the performance of monitoring enabling its deployment in production environments. We provide efficient methods to construct optimal monitors automatically from property specifications. We provide an example of how this can be extended in the future to construct approximate alternate monitors for larger properties. We hope that this novel approach will make monitoring particularly attractive under in the presence of high-frequency events and lossy channels. %
In the future, we would like to extend our framework to address infinite state monitors and empirically compare various loss types. %

\bibliography{references1,references2}
\bibliographystyle{ieeetr}
    \newpage
    \section{Appendix}

\subsection{\cref{sec:basic-definitions}}
\begin{proof}[\autoref{lemma:well-defined-NFA-labels}]
We refer to the DFA minimization algorithm in \cite{sipser13} which works by 1) constructing an undirected graph $G$ of states which cannot be merged together and 2) constructing classes of states which will be merged together.

Suppose $S_1 \in [S_2]$. To prove: $S_1 \cup S_2 \in [S_2]$. 

We proceed by contradiction. Suppose $S_1 \cup S_2 \not\in [S_2]$. Then $(S_1 \cup S_2,~S_2) \not\in G$.

This implies $\exists x \in \Sigma^*$ such that one of $\delta(S_1,~x)$ and $\delta(S_1 \cup S_2,~x)$ is a final state in the DFA and the other isn't.

\paragraph{\textbf{Case 1:} $\delta(S_1 \cup S_2, x) \cap F = \phi \land \delta(S_2, x) \cap F \neq \phi$.} 
$\delta(S_1 \cup S_2,~x) = \delta(S_1, x) \cup \delta(S_2,~x) \cap F \implies F \cap \delta(S_1,~x) \neq \phi \land F \cap \delta(S_2,~x) \neq \phi$, a contradiction.

\paragraph{\textbf{Case 2:} $\delta(S_1 \cup S_2, x) \cap F \neq \phi \land \delta(S_2, x) \cap F = \phi$.}

$\exists q \in F$ s.t. $ q \in \delta(S_1 \cup S_2, x) \implies q \in \delta(S_1, x) \cup \delta(S_2, x) \implies q \in \delta(S_1, x) $. 

This implies $\delta(S_1, x) \cap F \neq \phi \implies (S_1, S_2) \in G$, which is a contradiction.

\end{proof}

\subsection{\cref{sec:filters}}

\begin{proof}[\cref{remark:polytime-rinv}] $\delta(q, R^{-1}(y))$ is polynomial time computable for R represented as a NFT]
If R is represented by a NFT, then $R^{-1}(y)$ is a regex and $\delta(q_1, R^{-1}(y)) = S$ can be computed in polynomial time: $\forall~q_2 \in Q$ the intersection of $R^{-1}(y)$ and the regex formed by the set of strings which go from $q_1$ to $q_2$ is nonempty then $q_2 \in S$. We loop over $O(n)$ states and check if each is in $S$, and in each iteration the intersection and checking non-emptiness is polynomial time.
\end{proof}

\begin{definition}[$L_{sopt}(\phi, R)$]
For a property $\phi$ and loss model $R$, $L_{sopt}(\phi, R) \defeq \FR{R}(L(\phi)^C)$, i.e. $L_{sopt}$ is the set of lossy strings in $\Gamma^*$ produced by a error execution in $\Sigma^*$. $L_{sopt}$ is the smallest set of strings on which a sound alternate monitor cannot reach a true verdict. 
\end{definition}

\begin{proof}{(\cref{remark:sound-alternate-monitors})}
We argue that the construction in \cref{remark:sound-alternate-monitors} recognizes $L_{sopt}$. It is based on the superposed monitor construction in \cref{theorem:optimality}. Since superposed monitor guarantees that if original monitor is in state $q_{err}$, alternate monitor current state $S$ contains $q_{err}$ and hence we will report the string as a violation. Therefore the construction is sound.

Since the construction in \cref{theorem:optimality} is the minimum set of states we must be in to monitor while maintaining completeness, it is guaranteed that $\forall~q \in S$ for the current alternate monitor state $S$, $\exists x \in \Sigma^*$ and $f \in \FR{R}$ such that $\delta^\phi(q_{0}, x) = q$ and $q \in \delta^\psi(q_{0}, f(x))$. Therefore we only error on strings present in $L_{sopt}$.
\end{proof}

\subsection{\cref{sec:approximate-alternate-monitors}}

\begin{proof}{\cref{lemma:err-partition-refinement}}
The only nonaccept state in the determinization of an alternate NFA is $\set{q_{err}}$ .

Now for any string $x \in \Sigma*$, either $\delta(S, x) = \set{q_{err}} = \delta(S \cup \set{q_{err}}, x)$ or $\delta(q, x) \neq \set{q_{err}} \neq \delta(S \cup \set{q_{err}}, x)$, i.e. either both end up in an accept state, or both in a nonaccept state. 
\end{proof}

\begin{figure}
    \hspace{-0.45\textwidth}\includegraphics[scale=0.45]{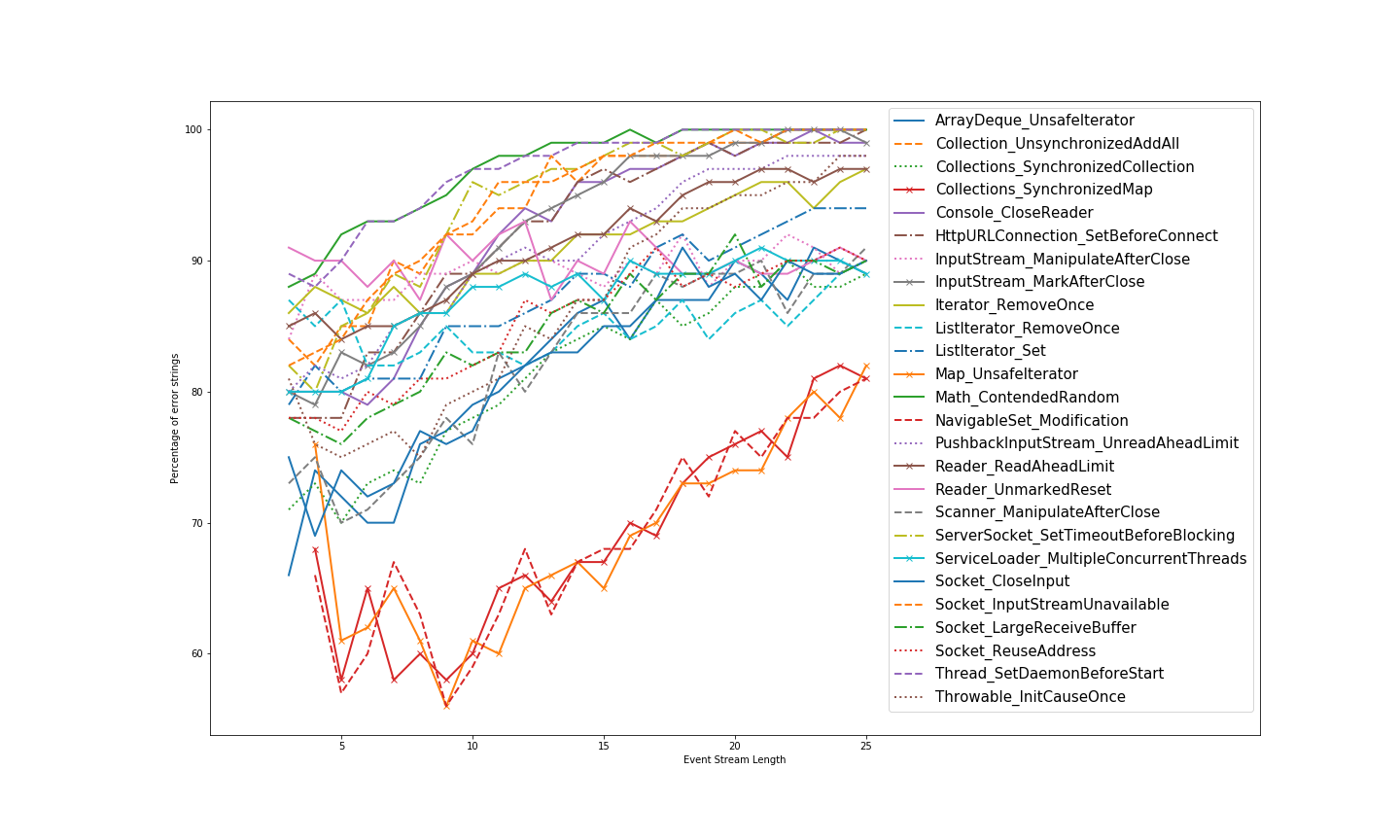}
    \caption{String length vs percentage of reported violations, for all properties}
    \label{fig:chart-all-properties}
\end{figure}

\newcolumntype{Y}{>{\centering\arraybackslash}X} %
\scriptsize

\begin{table}[t]
\scriptsize
\caption{Error detection for medium strings (Length 10 to 15).}
\vspace{2mm}
\begin{tabularx}{\textwidth}{|l|Y|Y|Y|Y|}
\hline
\textbf{Property}                        & $\rho: 0.1, \eta: 3$     & $\rho: 0.1, \eta: 6$     & $\rho: 0.3, \eta: 3$ &  $\rho: 0.3, \eta: 6$    \\
\hline
ArrayDeque\_UnsafeIterator               & 84\% (4986) & 78\% (4971) & 51\% (4978) & 39\% (4981) \\
Collections\_SynchronizedCollection      & 82\% (4910) & 75\% (4915) & 48\% (4924) & 39\% (4918) \\
Collections\_SynchronizedMap             & 66\% (3702) & 55\% (3717) & 23\% (3728) & 17\% (3660) \\
Collection\_UnsynchronizedAddAll         & 96\% (4998) & 90\% (5000) & 76\% (4999) & 65\% (4999) \\
Console\_CloseReader                     & 94\% (4991) & 88\% (4989) & 73\% (4989) & 58\% (4985) \\
HttpURLConnection\_SetBeforeConnect      & 94\% (4996) & 87\% (4993) & 74\% (4988) & 58\% (4988) \\
InputStream\_MarkAfterClose              & 94\% (4989) & 87\% (4985) & 74\% (4990) & 60\% (4987) \\
Iterator\_RemoveOnce                     & 91\% (4757) & 86\% (4757) & 67\% (4744) & 60\% (4774) \\
ListIterator\_RemoveOnce                 & 84\% (3853) & 79\% (3813) & 55\% (3812) & 50\% (3855) \\
ListIterator\_Set                        & 87\% (4622) & 80\% (4602) & 57\% (4586) & 46\% (4588) \\
List\_UnsynchronizedSubList              & 83\% (4986) & 77\% (4978) & 51\% (4975) & 40\% (4979) \\
Map\_UnsafeIterator                      & 65\% (3673) & 56\% (3717) & 23\% (3635) & 16\% (3683) \\
Math\_ContendedRandom                    & 98\% (5000) & 95\% (4998) & 91\% (4998) & 81\% (5000) \\
NavigableSet\_Modification               & 66\% (3687) & 59\% (3687) & 23\% (3697) & 16\% (3691) \\
PushbackInputStream\_UnreadAheadLimit    & 91\% (4852) & 84\% (4842) & 70\% (4819) & 56\% (4832) \\
Reader\_ReadAheadLimit                   & 91\% (4709) & 85\% (4698) & 70\% (4713) & 58\% (4699) \\
Reader\_UnmarkedReset                    & 90\% (2583) & 89\% (2519) & 70\% (2547) & 70\% (2510) \\
Scanner\_ManipulateAfterClose            & 84\% (4985) & 77\% (4979) & 50\% (4980) & 39\% (4979) \\
ServerSocket\_SetTimeoutBeforeBlocking   & 97\% (5000) & 91\% (4999) & 81\% (4999) & 67\% (4997) \\
ServiceLoader\_MultipleConcurrentThreads & 88\% (5000) & 85\% (4996) & 63\% (4996) & 56\% (5000) \\
Socket\_CloseInput                       & 83\% (4978) & 77\% (4984) & 50\% (4974) & 40\% (4974) \\
Socket\_InputStreamUnavailable           & 97\% (5000) & 91\% (4998) & 82\% (4996) & 69\% (5000) \\
Socket\_LargeReceiveBuffer               & 85\% (4994) & 82\% (4982) & 55\% (4988) & 47\% (4992) \\
Socket\_ReuseAddress                     & 86\% (4986) & 80\% (4990) & 57\% (4984) & 48\% (4988) \\
Thread\_SetDaemonBeforeStart             & 98\% (4999) & 94\% (4999) & 89\% (4998) & 79\% (4998) \\
Throwable\_InitCauseOnce                 & 85\% (4811) & 77\% (4778) & 53\% (4781) & 39\% (4785) \\
\hline
\end{tabularx}
\end{table}

\begin{table}[t]
\scriptsize
\caption{\protect\centering Error detection for large strings (Length 15 to 20). Only the properties which still have at least one non-violating string shown.}
\vspace{2mm}
\begin{tabularx}{\textwidth}{|l|Y|Y|Y|Y|}
\hline
\textbf{Property}                        & $\rho: 0.1, \eta: 3$     & $\rho: 0.1, \eta: 6$     & $\rho: 0.3, \eta: 3$ &  $\rho: 0.3, \eta: 6$    \\
\hline
ArrayDeque\_UnsafeIterator            & 87\% (4999) & 83\% (5000) & 58\% (4999) & 48\% (5000) \\
Collections\_SynchronizedCollection   & 86\% (4987) & 81\% (4987) & 55\% (4992) & 45\% (4989) \\
Collections\_SynchronizedMap          & 73\% (4572) & 61\% (4547) & 29\% (4566) & 21\% (4575) \\
Console\_CloseReader                  & 98\% (5000) & 92\% (4999) & 86\% (5000) & 70\% (5000) \\
HttpURLConnection\_SetBeforeConnect   & 98\% (4998) & 92\% (4999) & 86\% (4998) & 71\% (5000) \\
InputStream\_MarkAfterClose           & 98\% (4999) & 93\% (5000) & 85\% (5000) & 70\% (5000) \\
Iterator\_RemoveOnce                  & 93\% (4915) & 89\% (4915) & 73\% (4905) & 65\% (4913) \\
ListIterator\_RemoveOnce              & 85\% (4263) & 80\% (4252) & 59\% (4267) & 50\% (4261) \\
ListIterator\_Set                     & 90\% (4851) & 84\% (4854) & 64\% (4840) & 53\% (4851) \\
List\_UnsynchronizedSubList           & 88\% (4998) & 83\% (4998) & 59\% (4999) & 48\% (4998) \\
Map\_UnsafeIterator                   & 72\% (4542) & 63\% (4531) & 30\% (4519) & 21\% (4574) \\
NavigableSet\_Modification            & 73\% (4552) & 62\% (4560) & 32\% (4569) & 21\% (4553) \\
PushbackInputStream\_UnreadAheadLimit & 95\% (4966) & 88\% (4949) & 80\% (4953) & 66\% (4965) \\
Reader\_ReadAheadLimit                & 95\% (4903) & 88\% (4910) & 79\% (4914) & 66\% (4902) \\
Reader\_UnmarkedReset                 & 90\% (2458) & 90\% (2549) & 69\% (2483) & 71\% (2497) \\
Scanner\_ManipulateAfterClose         & 88\% (5000) & 83\% (4999) & 58\% (4999) & 48\% (4999) \\
Socket\_CloseInput                    & 88\% (5000) & 82\% (4999) & 59\% (5000) & 48\% (4999) \\
Socket\_LargeReceiveBuffer            & 89\% (4999) & 85\% (4999) & 61\% (5000) & 55\% (5000) \\
Socket\_ReuseAddress                  & 89\% (5000) & 84\% (4999) & 63\% (4999) & 54\% (4999) \\
Throwable\_InitCauseOnce              & 93\% (4953) & 84\% (4966) & 66\% (4959) & 50\% (4955) \\
\hline
\end{tabularx}
\end{table}
\end{document}